\newtheorem{theorem}{Theorem}
\newtheorem{lemma}{Lemma}
\def\BibTeX{{\rm B\kern-.05em{\sc i\kern-.025em b}\kern-.08em
    T\kern-.1667em\lower.7ex\hbox{E}\kern-.125emX}}
    \definecolor{lime}{HTML}{A6CE39}
\DeclareRobustCommand{\orcidicon}{
	\begin{tikzpicture}
	\draw[lime, fill=lime] (0,0) 
	circle [radius=0.16] 
	node[white] {{\fontfamily{qag}\selectfont \tiny ID}};
	\draw[white, fill=white] (-0.0625,0.095) 
	circle [radius=0.007];
	\end{tikzpicture}
	\hspace{-2mm}
}
\xdef\csname orcid\x\endcsname{\noexpand\href{https://orcid.org/\csname orcidauthor\x\endcsname}{\noexpand\orcidicon}}
\begin{document}

\title{ A Modified Depolarization Approach for Efficient Quantum Machine Learning}

\author{\IEEEauthorblockN{Bikram Khanal\orcidA{}}
\IEEEauthorblockA{\textit{Department of Computer Science} \\
\textit{Baylor University}\\
Waco, TX \\
bikram\_khanal1@baylor.edu}
\and
\IEEEauthorblockN{Pablo Rivas\orcidB{}, \emph{Senior, IEEE}}
\IEEEauthorblockA{\textit{Department of Computer Science} \\
\textit{Baylor University}\\
Waco, TX \\
pablo\_rivas@baylor.edu}}
\maketitle

\begin{abstract}
Quantum Computing in the Noisy Intermediate-Scale Quantum (NISQ) era has shown promising applications in machine learning, optimization, and cryptography. Despite the progress, challenges persist due to system noise, errors, and decoherence that complicate the simulation of quantum systems. The depolarization channel is a standard tool for simulating a quantum system's noise. However, modeling such noise for practical applications is computationally expensive when we have limited hardware resources, as is the case in the NISQ era. We propose a modified representation for a single-qubit depolarization channel with two Kraus operators based only on $X$ and $Z$ Pauli matrices. Our approach reduces the computational complexity from six to four matrix multiplications per execution of a channel. Experiments on a Quantum Machine Learning (QML) model on the Iris dataset across various circuit depths and depolarization rates validate that our approach maintains the model's accuracy while improving efficiency. This simplified noise model enables more scalable simulations of quantum circuits under depolarization, advancing capabilities in the NISQ era.
\end{abstract}

\begin{IEEEkeywords} NISQ, Depolarization Channel, Quantum Machine Learning, Circuit Depth Optimization.
\end{IEEEkeywords}
\section{Introduction}
Quantum Computing has seen significant progress in recent years, with the development of quantum algorithms for a variety of applications, including machine learning~\cite{mitarai2018quantum,havlivcek2019supervised,liu2021rigorous,sajjan2021quantum,cai2015entanglement,ciliberto2018quantum}, optimization~\cite{farhi2014quantum,mcclean2016theory,Rebentrost2016Quantum,Bittel2021Training,Rebentrost2018Quantum}, and cryptography~\cite{Broadbent2015Quantum,Padamvathi2016Quantum,Lai2017Fast,Pirandola2019Advances}. However, the development of quantum algorithms is still in its infancy, and many of the algorithms that have been developed are not yet ready for practical use~\cite{Harrow2017Quantum, preskill2018quantum}. Due to the susceptibility of NISQ device operations to errors and decoherence~\cite{pop00002}, simulating quantum systems remains a major challenge in developing quantum algorithms~\cite{preskill2018quantum}.

In the NISQ era, system noise is not merely a nuisance to be minimized but a fundamental force shaping the field of QML research.  Interestingly, a considerable number of works have chosen to regard noise not as a challenge but as an opportunity to advance their research. Studies have shown that quantum learning of n-bit parity functions remains remarkably efficient under depolarizing noise, a testament to the inherent resilience of quantum algorithms compared to their classical counterparts~\cite{Cross2014Quantum}. This early work demonstrated the potential for quantum algorithms to maintain a learning advantage even in noisy conditions. While traditionally viewed as a detrimental factor to quantum computation, depolarization noise under certain conditions can enhance the robustness and functionality of quantum learning algorithms against adversarial attacks~\cite{pop00001, pop00023,west2023towards, lu2020quantum,pop00008}. This counterintuitive finding highlights the potential of noise to endow quantum models with robustness against malicious attempts to manipulate the model's outputs.

However, harnessing the power of noise as a training tool requires careful consideration. The effectiveness of adversarial training techniques, for example, hinges on the assumption that the test attack and the training attack employ the same methods to generate adversarial examples. In real-world scenarios where attackers may employ diverse and unknown strategies, this advantage is not guaranteed~\cite{bai2021recent, kang2019transfer}. Therefore, deriving robust guarantees against worst-case scenarios remains crucial for building truly secure and resilient quantum learning algorithms.

The challenges posed by noise extend beyond algorithm design, impacting the very foundations of QML. The inherent noise in NISQ machines also presents significant challenges to the learning capabilities of Quantum Neural Networks~\cite{pop00002}. The presence of system noise can significantly diminish the quantum kernel advantage~\cite{huang2021power}, raising concerns about the viability of quantum kernel methods~\cite{pop00003}. Additionally, calculating numerical gradients on noisy qubits presents a delicate balancing act: reducing the step size to improve accuracy can obscure subtle differences in the cost function for nearby parameter values~\cite{pop00021}.
This complexity necessitates a deeper exploration into controlled noise simulations, such as the depolarization channel, to better understand and counteract these effects.

In the worst-case scenario, we can use the depolarization channel to simulate the quantum system's noise~\cite{pop00003}. However, executing depolarizing noise in a controlled manner on quantum hardware presents both critical challenges and intriguing opportunities for advancing our understanding and mitigation of this noise model. One of the primary challenges in executing depolarizing noise lies in its inherently probabilistic nature. Depolarization introduces errors with a certain probability, often modeled by the Kraus operators, onto the quantum state~\cite{nielsen2010quantum}. Implementing such probabilistic errors precisely on hardware requires sophisticated control techniques and careful calibration procedures. Inaccurate noise injection can lead to deviations from the expected noise model, compromising the validity of subsequent experiments and analyses. To navigate these challenges and unlock the full potential of QML, the development of robust error correction techniques is paramount~\cite{Wootton2012High}. Techniques like surface codes~\cite{fowler2012surface} and stabilizer codes~\cite{gottesman1997stabilizer} offer promising avenues for mitigating the detrimental effects of noise and safeguarding the integrity of quantum computations. Using noise-estimate circuits combined with other error correction techniques, we can estimate and correct such noise even in the circuits with extensive CNOT gates~\cite{urbanek2021mitigating}. We can further enhance the accuracy of error mitigation using multi-exponential error mitigation techniques~\cite{Cai2020Multi-exponential}. While using this technique, we may be able to model depolarizing noise accurately, but the computational cost can be prohibitive in the NISQ era~\cite{pop00004}.

The conventional model for representing depolarization noise employs three Pauli matrices, $(X, Y, \text{ and } Z)$, to capture isotropic noise processes affecting quantum states~\cite{nielsen2010quantum}. While comprehensive, this standard approach requires three Kraus operators corresponding to each Pauli matrix, resulting in six total matrix multiplications to simulate the noise channel. However, as discussed by~\cite{preskill2018quantum}, this mathematical formalism introduces significant computational overhead, particularly in near-term systems where hardware resources are limited. 

To address these challenges, we propose a modified approach for single-qubit depolarization utilizing only two Kraus operators and the $X$ and $Z$ Pauli matrices. This reduced model not only simplifies the mathematical representation but also decreases the computational complexity from six to four matrix multiplications for each noise channel execution. Such an approach provides a more efficient means of simulating depolarization in resource-constrained quantum hardware, an essential capability in the NISQ era where computational resources are scarce. By developing simplified yet representative noise models, our work aims to enable more efficient and scalable approaches to simulating and correcting depolarization noise in deep quantum circuits.

The rest of the paper is organized as follows. Section~\ref{sec:derivation} provides background on the standard depolarizing channel and derives the proposed modified channel representation.  Section~\ref{sec:exp} experimentally analyzes the modified channel on the QML task on the Iris dataset. Section~\ref{sec:discussion} discusses the implications of our method, and section~\ref{sec:conclusion} concludes with a summary of our contributions and an outlook on future research directions.

\section{Derivation}\label{sec:derivation}

\subsection{Standard Expression of the Depolarizing Channel}

The depolarizing channel represents a quantum noise where the qubit state is replaced by a completely mixed state with a certain probability~\cite{nielsen2010quantum}. The standard expression of the depolarizing channel $\mathcal{E}$ for a single qubit is given by:
\begin{equation} \label{eq:generalDepolarize}
 \rho' = \mathcal{E}(\rho) = (1 - p) \rho + \frac{p}{3} (X \rho X + Y \rho Y + Z \rho Z)
\end{equation}
where $\rho$ is the density matrix; $p$ is the depolarization rate; and $X$, $Y$, and $Z$ are the Pauli matrices, for the $X$, $Y$, and $Z$ quantum gates, respectively.

Eq.~(\ref{eq:generalDepolarize}) implies that with probability $(1 - p)$, the qubit state remains unchanged, and with probability $p$, it is subjected to equal mixtures of bit-flip, phase-flip, and both bit and phase-flip errors.

We can represent~(\ref{eq:generalDepolarize}) using Kraus operators as:
\begin{equation} \label{eq:generalkraus}
    \rho' = K_0 \rho K_0^\dag + K_1 \rho K_1^\dag + K_2 \rho K_2^\dag + K_3 \rho K_3 ^\dag = \sum_i K_i \rho K_i^\dag
\end{equation}
where, $$
\begin{array}{llll}
K_0 = \sqrt{1-p}\mathbb{I},&
K_1 = \sqrt{\frac{p}{3}}X, &
K_2 = \sqrt{\frac{p}{3}}Y, &
K_3 = \sqrt{\frac{p}{3}}Z 
\end{array}
$$
$\mathbb{I}$ is an identity matrix. Having defined the standard depolarizing channel and associated Kraus operators, we will next derive an alternative representation of this channel.

\subsection{Alternative Expression of the Depolarizing Channel}
We introduce an alternative representation of the depolarization channel characterized by reduced matrix multiplication operations that only use the $X$ and $Z$ Pauli matrices. We define this alternative representation of the depolarizing channel as:
\begin{equation} \label{eq:modifiedDepolarize}
     \rho_{m}' = (1 - \frac{2p}{3}) \rho + \frac{2p}{3} Z((\rho X)^T X) Z
\end{equation}
In this representation, the state is partly retained with a coefficient of $(1 - \frac{2p}{3})$ and partly subjected to a specific combination of Pauli $X$ and $Z$ operations with a coefficient of $\frac{2p}{3}$. This alternative expression is validated below to produce the same results as~(\ref{eq:generalDepolarize}).

\begin{theorem}
Eq. (\ref{eq:generalDepolarize}) and (\ref{eq:modifiedDepolarize}) are equivalent.
\end{theorem}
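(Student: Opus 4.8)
The plan is to reduce each of (\ref{eq:generalDepolarize}) and (\ref{eq:modifiedDepolarize}) to one and the same affine expression in $\rho$, namely $\bigl(1-\tfrac{4p}{3}\bigr)\rho + \tfrac{2p}{3}\mathbb{I}$, and then conclude equality from that common closed form.

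First I would treat the standard channel. The key input is the single-qubit Pauli completeness relation: for any $2\times 2$ matrix $M$, $\mathbb{I}M\mathbb{I} + XMX + YMY + ZMZ = 2\,\mathrm{tr}(M)\,\mathbb{I}$. This follows in one line, either by evaluating both sides on a generic $M=\bigl(\begin{smallmatrix} a & b\\ c & d\end{smallmatrix}\bigr)$, or from the standard fact that $\{\mathbb{I},X,Y,Z\}/\sqrt{2}$ is an orthonormal operator basis. Setting $M=\rho$ and using $\mathrm{tr}(\rho)=1$ gives $X\rho X + Y\rho Y + Z\rho Z = 2\mathbb{I}-\rho$, so (\ref{eq:generalDepolarize}) collapses to $\mathcal{E}(\rho) = (1-p)\rho + \tfrac{p}{3}(2\mathbb{I}-\rho) = \bigl(1-\tfrac{4p}{3}\bigr)\rho + \tfrac{2p}{3}\mathbb{I}$.

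Next I would simplify the nonstandard term in (\ref{eq:modifiedDepolarize}). Since $X^{T}=X$, we have $(\rho X)^{T}X = X\rho^{T}X$, hence $Z\bigl((\rho X)^{T}X\bigr)Z = ZX\,\rho^{T}\,XZ$. I would then establish that $ZX\,M^{T}\,XZ = \mathrm{tr}(M)\,\mathbb{I}-M$ for every $2\times 2$ matrix $M$: conjugating $M^{T}$ by $XZ$ swaps the diagonal entries of $M$ and negates its off-diagonal entries, which is precisely $\mathrm{tr}(M)\mathbb{I}-M$. (Equivalently, $XZ=-iY$, so the term is $Y\rho^{T}Y$, which one could also reshape using Hermiticity of $\rho$; the explicit $2\times 2$ computation is the most direct route.) With $\mathrm{tr}(\rho)=1$ this yields $Z\bigl((\rho X)^{T}X\bigr)Z = \mathbb{I}-\rho$, so (\ref{eq:modifiedDepolarize}) collapses to $\rho_{m}' = \bigl(1-\tfrac{2p}{3}\bigr)\rho + \tfrac{2p}{3}(\mathbb{I}-\rho) = \bigl(1-\tfrac{4p}{3}\bigr)\rho + \tfrac{2p}{3}\mathbb{I}$.

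Comparing the two reduced expressions shows $\mathcal{E}(\rho)=\rho_{m}'$ for every single-qubit density matrix $\rho$ and every $p$, which is the claim. The argument is essentially bookkeeping, so there is no deep obstacle; the step requiring care is the transpose in (\ref{eq:modifiedDepolarize}) — one must respect the order of operations in $(\rho X)^{T}X$ and notice that the identity $Z\bigl((\rho X)^{T}X\bigr)Z = \mathrm{tr}(\rho)\mathbb{I}-\rho$ actually holds for arbitrary $2\times 2$ matrices, so Hermiticity of $\rho$ is used only through $\mathrm{tr}(\rho)=1$. It is also worth double-checking the arithmetic that the pairs $\bigl((1-p),\ \tfrac{p}{3}(2\mathbb{I}-\rho)\bigr)$ and $\bigl((1-\tfrac{2p}{3}),\ \tfrac{2p}{3}(\mathbb{I}-\rho)\bigr)$ both telescope to the same $\bigl(1-\tfrac{4p}{3}\bigr)\rho + \tfrac{2p}{3}\mathbb{I}$.
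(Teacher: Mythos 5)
Your proof is correct, but it takes a different route from the paper. The paper proves the theorem by brute force: it substitutes a generic $\rho=\bigl(\begin{smallmatrix} a & b\\ c & d\end{smallmatrix}\bigr)$ into both (\ref{eq:generalDepolarize}) and (\ref{eq:modifiedDepolarize}) and checks entrywise that the two results coincide (the common matrix it displays in (\ref{eq:simplifiedeq}) appears to contain typographical slips in the diagonal entries; the correct common value is exactly your closed form $(1-\tfrac{4p}{3})\rho+\tfrac{2p}{3}\mathbb{I}$). You instead reduce each side structurally: the Pauli twirl identity $X\rho X+Y\rho Y+Z\rho Z=2\,\mathrm{tr}(\rho)\mathbb{I}-\rho$ collapses (\ref{eq:generalDepolarize}), and the identity $ZX\,M^{T}\,XZ=\mathrm{tr}(M)\mathbb{I}-M$ (equivalently $XZ=-iY$, so the modified term is $Y\rho^{T}Y$) collapses (\ref{eq:modifiedDepolarize}), both to the same affine map. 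What your route buys is insight and generality: it exhibits both channels as the familiar contraction toward the maximally mixed state, it shows the two expressions agree as linear maps on \emph{all} $2\times 2$ matrices (Hermiticity of $\rho$ is never used, only $\mathrm{tr}(\rho)=1$), and it handles the transpose in $(\rho X)^{T}X$ carefully, whereas the paper's later manipulation relies on $\rho=\rho^{T}$, which holds only for real $\rho$. What the paper's direct substitution buys is that it is entirely elementary and self-contained, requiring no operator identities; both arguments are a few lines once set up. Your checks of the two identities and the final coefficient bookkeeping are all accurate, so the proposal stands as a complete proof.
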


\begin{proof}
Consider the Pauli Matrices:
\[ \begin{array}{cccc}
    X = \begin{bmatrix}
    0 & 1 \\
    1 & 0
    \end{bmatrix},& \hspace{-0.5em}
    Y = \begin{bmatrix}
        0 & -i \\
        i & 0
    \end{bmatrix},&\hspace{-0.5em}
    Z = \begin{bmatrix}
        1 & 0 \\
        0 & -1
    \end{bmatrix},& \hspace{-0.5em}
    \mathbb{I} = \begin{bmatrix}
        1 & 0 \\
        0 & 1
    \end{bmatrix}
\end{array} \]

Let us consider an arbitrary density matrix $\rho = \begin{bmatrix} a & b \\ c & d \end{bmatrix}$ for a single qubit quantum state.
Substituting $\rho$ in (\ref{eq:generalDepolarize}) and ~(\ref{eq:modifiedDepolarize}) and with trivial algebraic work, we get:
    \begin{equation} \label{eq:simplifiedeq}
    \rho' = \rho_{m}' = 
        \begin{bmatrix}
            -2a\frac{p}{3} + 2ad\frac{p}{3} & -4b\frac{p}{3} + b \\
            -4c\frac{p}{3} + c & -2d \frac{p}{3} + d .
        \end{bmatrix}
    \end{equation}
Hence, it can be seen that (\ref{eq:generalDepolarize}) and (\ref{eq:modifiedDepolarize}) is the same for a single qubit and for an arbitrary $\rho$. 
\end{proof}
Next, we will define Kraus operators and prove their validity. 

\begin{theorem}\label{th:modifiedKraus}
The following Kraus operators for (\ref{eq:modifiedDepolarize}) are valid operators.
$$
\begin{array}{cc}
    K_0 = \sqrt{1 -\frac{2p}{3}} \mathbb{I}, &
    K_1 = i \sqrt{\frac{2p}{3}} ZX .
\end{array}
$$
\end{theorem}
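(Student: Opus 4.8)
The plan is to verify the two conditions that make $\{K_0,K_1\}$ a bona fide Kraus representation of~(\ref{eq:modifiedDepolarize}): the completeness (trace-preservation) relation $K_0^\dagger K_0 + K_1^\dagger K_1 = \mathbb{I}$, and the operator-sum identity $K_0\rho K_0^\dagger + K_1\rho K_1^\dagger = \rho_m'$ with $\rho_m'$ given by~(\ref{eq:modifiedDepolarize}). I would dispose of the completeness relation first, since it is purely algebraic and uses nothing about $\rho$. From $K_0=\sqrt{1-\tfrac{2p}{3}}\,\mathbb{I}$ we get $K_0^\dagger K_0=(1-\tfrac{2p}{3})\mathbb{I}$ at once. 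For $K_1=i\sqrt{\tfrac{2p}{3}}\,ZX$, the Hermiticity $X^\dagger=X$, $Z^\dagger=Z$ gives $K_1^\dagger=-i\sqrt{\tfrac{2p}{3}}\,XZ$, so $K_1^\dagger K_1=\tfrac{2p}{3}\,XZZX=\tfrac{2p}{3}\,X(Z^2)X=\tfrac{2p}{3}\,X^2=\tfrac{2p}{3}\,\mathbb{I}$, with the phase $(-i)(i)=1$ and the involutivity $X^2=Z^2=\mathbb{I}$ doing all the work. Adding the two pieces yields $\mathbb{I}$, so the pair is a legitimate trace-preserving Kraus set; if that is all that ``valid operators'' is meant to assert, the proof is essentially finished here.

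For the stronger reading, that these two operators actually generate the channel~(\ref{eq:modifiedDepolarize}), I would compute $K_0\rho K_0^\dagger=(1-\tfrac{2p}{3})\rho$ and, the phase again squaring to $1$, $K_1\rho K_1^\dagger=\tfrac{2p}{3}\,(ZX)\rho(XZ)$. The cleanest route is to recognize $ZX=iY$ (equivalently $K_1=-\sqrt{\tfrac{2p}{3}}\,Y$), so this term is $\tfrac{2p}{3}\,Y\rho Y$; on the other side, the rotation term of~(\ref{eq:modifiedDepolarize}) simplifies via $(\rho X)^T=X^T\rho^T=X\rho^T$ into $\tfrac{2p}{3}\,ZX\rho^T XZ=\tfrac{2p}{3}\,Y\rho^T Y$. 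It then suffices to invoke the $2\times 2$ identity $Z((\rho X)^T X)Z=\mathrm{Tr}(\rho)\,\mathbb{I}-\rho$, obtained by plugging the generic $\rho=\left(\begin{smallmatrix} a & b\\ c & d\end{smallmatrix}\right)$ into both sides exactly as in the proof of Theorem~1, whereupon~(\ref{eq:generalDepolarize}) collapses to~(\ref{eq:simplifiedeq}) and one checks that the two-term expansion lands on the same matrix.

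I expect the delicate point to be this last reconciliation rather than the completeness check. An honest operator sum $\sum_i K_i\rho K_i^\dagger$ is a conjugation and hence manifestly completely positive, whereas~(\ref{eq:modifiedDepolarize}) carries an explicit transpose and so is a priori only positive; the argument must therefore keep careful track of how $\rho^T$ is absorbed, concretely that $ZX$ agrees with $Y$ up to a global phase and that passing from $\rho$ to $\rho^T$ leaves the diagonal untouched and only exchanges the off-diagonal entries. This makes the conjugation form $\tfrac{2p}{3}\,Y\rho Y$ and the transpose form $\tfrac{2p}{3}\,Z((\rho X)^T X)Z$ coincide precisely on the states with $\rho=\rho^T$ --- in particular on the real, (angle-)encoded single-qubit density matrices relevant to the QML task of Section~\ref{sec:exp} --- so the matching should be stated with that domain in mind. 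Once that bookkeeping is in place, nothing beyond the ``trivial algebraic work'' of Theorem~1 is required, and in particular no Choi isomorphism or explicit unitary dilation enters the argument.
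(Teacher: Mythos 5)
Your completeness check is exactly the paper's own validation step: the paper likewise reads $K_0$ off the first term, posits $K_1 = x\,ZX$, fixes $|x|=\sqrt{2p/3}$ from $K_1K_1^\dagger=\frac{2p}{3}\mathbb{I}$, inserts the phase $i$, and confirms $K_0^\dagger K_0+K_1^\dagger K_1=\mathbb{I}$, so for the literal claim of Theorem~\ref{th:modifiedKraus} the two arguments coincide. Where you genuinely differ is on the stronger reading, and there your account is the more careful one. The paper converts $\frac{2p}{3}Z((\rho X)^T X)Z$ into $\frac{2p}{3}ZX\rho XZ$ by asserting $\rho=\rho^T$ (and ``$AX^T=XA$'') as if these held for every state; you keep the transpose and observe correctly that the operator sum yields $\frac{2p}{3}Y\rho Y$ while (\ref{eq:modifiedDepolarize}) yields $\frac{2p}{3}Y\rho^T Y$, and that the two agree precisely when $\rho=\rho^T$. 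Your identity $Z((\rho X)^T X)Z=\mathrm{Tr}(\rho)\,\mathbb{I}-\rho$ shows why this restriction is unavoidable: (\ref{eq:modifiedDepolarize}) really is the depolarizing map $(1-\frac{4p}{3})\rho+\frac{2p}{3}\mathbb{I}$, whose Choi rank is four, so no two-element Kraus set can reproduce it on all states, and $\{K_0,K_1\}$ instead generate the $Y$-flip channel; hence (\ref{eq:modifiedKraus}) holds only on symmetric $\rho$, a qualification the paper's proof silently skips over. Two small corrections to your write-up: the clause ``one checks that the two-term expansion lands on the same matrix'' is true only after the restriction you impose a paragraph later, and the angle-encoded states of Section~\ref{sec:exp} are generally \emph{not} real (the $RX$ rotation introduces imaginary amplitudes), so the experiments are insulated not by reality of $\rho$ but by the fact that both channels give the same Pauli-$Z$ statistics, $\mathrm{Tr}(Z\rho')=(1-\frac{4p}{3})\mathrm{Tr}(Z\rho)$, for every $\rho$.
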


\begin{proof}
From (\ref{eq:modifiedDepolarize}), one can immediately see that the corresponding Kraus operator corresponding to the term $(1 - \frac{2p}{3}) \rho$ is: $\sqrt{1 -\frac{2p}{3}}{\mathbb{I}}$.
Now let us consider the second terminology of (\ref{eq:modifiedDepolarize}), i.e., $\frac{2p}{3}Z((\rho X)^T X ) Z$, which enables us to re-write without loss of generality the following:
\begin{multline*}
\frac{2p}{3} Z((\rho X )^T X) Z = \frac{2p}{3} ZX\rho XZ \quad\\
(\because \rho = \rho^T \text{ and } AX^T = XA),\text{ for any }A \in \mathbb{R}^{2 \times 2} 
\end{multline*}

Next, we want a Kraus operator \( K_1 \) s.t. \( K_1 \rho K_1^\dag = ZX \rho XZ \)

Thus, intuitively,
$$ 
\begin{array}{cc}
K_1 \propto ZX, & 
K_1 = x ZX, \quad (x \text{ is a scalar}).
\end{array}
$$
Following the Kraus operator completeness constraint, we can write:
\begin{equation*}
K_0 K_0^{\dagger} + K_1 K_1^{\dagger} = \mathbb{I} ,
\end{equation*}
or
\begin{equation} \label{eq:modifiedK1validity}
K_1 K_1^{\dagger} = \frac{2p}{3} \mathbb{I}
\end{equation}

To satisfy (\ref{eq:modifiedK1validity}), \( K_1 \) must have a magnitude of $\sqrt{\frac{2p}{3}}$. Therefore, 
\begin{equation*}
K_1 = \sqrt{\frac{2p}{3}} ZX
\end{equation*}

We added \say{$i$} to correct a sign discrepancy while validating the operators, resulting in:
\begin{equation*} \label{eq:KrausModifiedK1}
K_1 = i \sqrt{\frac{2p}{3}} ZX
\end{equation*}

\subsubsection*{Validation of Kraus operators}

Any set of Kraus operators satisfies the completeness property. That is,
$$\sum_i K_i^\dag K_i = K_0^\dag K_0 + K_1^\dag K_1 =  \mathbb{I}$$
Solving each of the Kraus operators squared individually, we can get,
$$
\begin{array}{cc}
    K_0^\dag K_0 = (1 - \frac{2p}{3})\mathbb{I}, &
    K_1^\dag K_1 = \frac{2p}{3}\mathbb{I}
\end{array}$$
$$
    \therefore K_0^\dag K_0 + K_1^\dag K_1 = (1 - \frac{2p}{3})\mathbb{I} + \frac{2p}{3}\mathbb{I} = \mathbb{I}
$$
This proves that the Kraus operators proposed in theorem~(\ref{th:modifiedKraus}) are valid Kraus operators. 
\end{proof}

It follows from~(\ref{eq:generalkraus}) that we can re-define (\ref{eq:modifiedDepolarize}) as:
\begin{equation}\label{eq:modifiedKraus}
    \rho_{m}' = K_0 \rho K_0^\dag + K_1 \rho K_1^\dag
\end{equation}

In general, for a single qubit representation (\ref{eq:generalDepolarize}), (\ref{eq:generalkraus}), (\ref{eq:modifiedDepolarize}), and (\ref{eq:modifiedKraus}) yield the same result.

The above derivations show that the modified depolarization channel expression is equivalent to the standard equation. We further proved that the proposed Kraus operators for (\ref{eq:modifiedDepolarize}) are valid Kraus operators. The next step is to show that the matrix given by the modified channel is a valid density matrix. To do this, we need to prove that (\ref{eq:modifiedDepolarize}) is Hermitian, positive semi-definite, and has a unit trace.

\begin{theorem}
The matrix given by (\ref{eq:modifiedDepolarize}) is a valid density matrix.
\end{theorem}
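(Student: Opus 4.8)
The plan is to verify the three defining properties of a density matrix for $\rho_m'$---Hermiticity, positive semidefiniteness, and unit trace---working from the Kraus form $\rho_m' = K_0\rho K_0^\dag + K_1\rho K_1^\dag$ in~(\ref{eq:modifiedKraus}) together with Theorem~\ref{th:modifiedKraus}, rather than from the entrywise expression~(\ref{eq:modifiedDepolarize}) directly. Using the operator--sum form makes each property nearly immediate and reuses the completeness relation already established. Throughout I assume the admissible range $0 \le p \le 1$ (in fact $p \le 3/2$ suffices), so that $1 - \tfrac{2p}{3} \ge 0$ and $K_0 = \sqrt{1-\tfrac{2p}{3}}\,\mathbb{I}$ is a well-defined real Kraus operator; flagging this range is the one thing that must not be skipped.

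\textbf{Hermiticity.} First I would use that $\rho$ is Hermitian and compute $(\rho_m')^\dag = (K_0\rho K_0^\dag)^\dag + (K_1\rho K_1^\dag)^\dag = K_0\rho^\dag K_0^\dag + K_1\rho^\dag K_1^\dag = \rho_m'$. Equivalently, starting from~(\ref{eq:modifiedDepolarize}) one notes $Z((\rho X)^T X)Z = ZX\rho XZ$ (the identity already used in the proof of Theorem~\ref{th:modifiedKraus}), which is Hermitian because $X$, $Z$, and $\rho$ are, and a real-coefficient sum of Hermitian matrices is Hermitian.

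\textbf{Unit trace.} Next I would invoke the completeness relation $K_0^\dag K_0 + K_1^\dag K_1 = \mathbb{I}$ proved in Theorem~\ref{th:modifiedKraus} and cyclicity of the trace: $\operatorname{Tr}(\rho_m') = \operatorname{Tr}(K_0^\dag K_0\rho) + \operatorname{Tr}(K_1^\dag K_1\rho) = \operatorname{Tr}\!\big((K_0^\dag K_0 + K_1^\dag K_1)\rho\big) = \operatorname{Tr}(\rho) = 1$. As a consistency check, the diagonal entries of the explicit matrix in~(\ref{eq:simplifiedeq}) also sum to $a + d = \operatorname{Tr}(\rho) = 1$.

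\textbf{Positive semidefiniteness.} Finally, for an arbitrary $\ket{\psi}$ I would write $\bra{\psi}\rho_m'\ket{\psi} = \bra{\psi}K_0\rho K_0^\dag\ket{\psi} + \bra{\psi}K_1\rho K_1^\dag\ket{\psi} = \bra{\phi_0}\rho\ket{\phi_0} + \bra{\phi_1}\rho\ket{\phi_1}$ with $\ket{\phi_i} = K_i^\dag\ket{\psi}$; each term is nonnegative because $\rho \succeq 0$, so $\rho_m' \succeq 0$. This is the only step where anything is actually at stake: it needs $\rho \succeq 0$ (given) and the nonnegativity of the prefactor $1 - \tfrac{2p}{3}$, which is exactly why the range of $p$ must be stated. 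I expect this to be the sole genuine obstacle, the rest being bookkeeping. Combining the three properties shows that $\rho_m'$ is a valid density matrix.
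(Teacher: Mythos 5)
Your proof is correct and takes essentially the same approach as the paper: both verify Hermiticity, positive semidefiniteness, and unit trace directly from the operator--sum form $\rho_m' = K_0\rho K_0^\dag + K_1\rho K_1^\dag$. Your trace step, which applies the completeness relation $K_0^\dag K_0 + K_1^\dag K_1 = \mathbb{I}$ in one stroke, is actually tidier than the paper's term-by-term computation (which contains a sign slip in $K_1^\dag K_1$ yet lands on the same conclusion), and your explicit remark on the admissible range of $p$ ensuring $1-\tfrac{2p}{3}\ge 0$ is a reasonable precaution the paper leaves implicit.
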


\begin{proof}
First, let us shows that \( \rho_{m}' \) is Hermitian.

A matrix is Hermitian if it equals its own conjugate transpose, \( A = A^\dagger \).

To show \( \rho_{m}' \) is Hermitian, we calculate its conjugate transpose using its definition given by~(\ref{eq:modifiedKraus}):
\[
\begin{aligned}
(\rho_{m}')^\dagger &= (K_0 \rho K_0^\dag + K_1 \rho K_1^\dag)^\dagger \\
&= (K_0 \rho K_0^\dag)^\dagger + (K_1 \rho K_1^\dag)^\dagger \\
&= (K_0^\dag)^\dagger \rho^\dagger K_0^\dagger + (K_1^\dag)^\dagger \rho^\dagger K_1^\dagger \quad (\because (AB)^\dagger = B^\dagger A^\dagger ) \\
&= K_0 \rho K_0^\dag + K_1 \rho K_1^\dag \\
&= \rho_{m}' .
\end{aligned}
\]

Since \( K_0 \) and \( K_1 \) are constructed from unitary matrices and complex numbers, their conjugate transposes are simply their adjoints, hence \( \rho_{m}' \) is Hermitian.

Second, let us show that \( \rho_{m}' \) is Positive Semi-Definite.

Given the Kraus operators \(K_0\) and \(K_1\), and density matrix \( \rho_{m}' \)
we want to show that for any vector \(v\), the expectation value \(v^\dagger \rho_{m}' v\) is non-negative.

Starting with the expression for \(v^\dagger \rho_{m}' v\):
\[
v^\dagger \rho_{m}' v = v^\dagger (K_0 \rho K_0^\dagger + K_1 \rho K_1^\dagger) v.
\]
This expands to:
\[
v^\dagger \rho_{m}' v = v^\dagger K_0 \rho K_0^\dagger v + v^\dagger K_1 \rho K_1^\dagger v.
\]
We can express each term as:
\[
v^\dagger K_0 \rho K_0^\dagger v = (K_0^\dagger v)^\dagger \rho (K_0^\dagger v),
\]
\[
v^\dagger K_1 \rho K_1^\dagger v = (K_1^\dagger v)^\dagger \rho (K_1^\dagger v)
\]
where \(w_0 = K_0^\dagger v\) and \(w_1 = K_1^\dagger v\) are vectors in the Hilbert space.

Since \(\rho\) is a positive semi-definite density matrix, we have for any vector \(w\), \(w^\dagger \rho w \geq 0\). Applying this to \(w_0\) and \(w_1\):
\[
(w_0)^\dagger \rho w_0 \geq 0,
\]
\[
(w_1)^\dagger \rho w_1 \geq 0.
\]
Therefore, the sum is also non-negative:
\[
v^\dagger \rho_{m}' v = (w_0)^\dagger \rho w_0 + (w_1)^\dagger \rho w_1 \geq 0,
\]
which establishes that \( \rho_{m}' \) is positive semi-definite.

Third, let us show that \( \rho_{m}' \) has Unit Trace ,i.e., \( \text{Tr}(\rho_{m}') = 1 \).


The trace of \( \rho_{m}' \) is given by:
\[
\text{Tr}(\rho_{m}') = \text{Tr}(K_0 \rho K_0^\dag) + \text{Tr}(K_1 \rho K_1^\dag).
\]
Using the cyclic property of the trace, we can rewrite this as:
\[
\text{Tr}(\rho_{m}') = \text{Tr}(K_0^\dag K_0 \rho) + \text{Tr}(K_1^\dag K_1 \rho).
\]
Computing \( K_0^\dag K_0 \) and \( K_1^\dag K_1 \), we get:
\[
K_0^\dag K_0 = (1 -\frac{2p}{3}) \mathbb{I},
\]
\[
K_1^\dag K_1 = -\frac{2p}{3} \mathbb{I},
\]

Thus, the trace of \( \rho_{m}' \) simplifies to:
\[
\text{Tr}(\rho_{m}') = \text{Tr}((1 -\frac{2p}{3}) \mathbb{I} \rho) - \text{Tr}(\frac{2p}{3} \mathbb{I} \rho).
\]
As \( \rho \) is a density matrix with unit trace, \( \text{Tr}(\rho) = 1 \), this leads to:
\[
\text{Tr}(\rho_{m}') = (1 -\frac{2p}{3}) - \frac{2p}{3} = 1 - 2p + 2p = 1.
\]
Therefore, \( \rho_{m}' \) maintains the unit trace property, as required for any density matrix.
\end{proof}

Next, we derive the expression on how (\ref{eq:modifiedDepolarize}) evolves when the depolarization channel is applied $m$ times on a quantum state $\rho$ that leads us to the following lemma. 

\begin{lemma}
When a depolarization channel with $p$ depolarizing rate is applied on a quantum state $\rho$ up to $m$ times, the resulting quantum state is defined as follows up to first order in $p$:
\begin{equation}\label{eq:modifiedDepolarizeUptoMTimes}
\rho'_{mm} = (1-\frac{2mp}{3})\rho + \frac{2mp}{3} Z((\rho X)^T X) Z + \mathcal{O}(p^2)
\end{equation}
And, for an observable $O$, the expectation value is given as:
\begin{multline}\label{eq:expvalofmodifiedm}
    \langle O \rangle_{\rho'_{mm}} = \text{Tr}\{O \rho'_{mm} \} = \text{Tr}\{O \rho\} - \frac{2mp}{3} \text{Tr}\{O \rho\} \\
    + \frac{2mp}{3} \text{Tr}\{OZX^T\rho^TXZ\} .
\end{multline}
\end{lemma}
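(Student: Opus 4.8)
The plan is to view one application of the modified channel as a linear superoperator, iterate it in closed form, and only then expand in $p$. Write $\mathcal{E}(\rho) = (1-q)\rho + q\,\Phi(\rho)$ with $q := \tfrac{2p}{3}$ and $\Phi(\rho) := Z\big((\rho X)^{T}X\big)Z = ZX^{T}\rho^{T}XZ$, so that $\rho'_{m} = \mathcal{E}(\rho)$ is exactly~(\ref{eq:modifiedDepolarize}) and $\rho'_{mm} = \mathcal{E}^{m}(\rho)$. Each elementary operation building $\Phi$ — right-multiplication by $X$, transposition, conjugation by $Z$ — is linear, so $\Phi$ and $\mathcal{E}$ are linear maps on $2\times 2$ matrices. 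The single algebraic fact I need is that $\Phi$ is an involution, $\Phi^{2}=\mathrm{id}$; this follows from a one-line $2\times 2$ computation (equivalently from $(ZX)^{2} = -\mathbb{I}$, so that $(ZX)^{2}\rho\big((ZX)^{\dagger}\big)^{2} = \rho$).

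Next I would prove by induction on $m$ that $\mathcal{E}^{m}(\rho) = A_{m}\,\rho + B_{m}\,\Phi(\rho)$ for scalars with $A_{0}=1$, $B_{0}=0$ and, using linearity of $\mathcal{E}$ together with $\Phi^{2}=\mathrm{id}$, the recursions $A_{m+1} = (1-q)A_{m} + qB_{m}$ and $B_{m+1} = qA_{m} + (1-q)B_{m}$. Adding and subtracting these gives $A_{m}+B_{m}=1$ (trace preservation) and $A_{m}-B_{m} = (1-2q)^{m}$, hence $A_{m} = \tfrac12\big(1+(1-2q)^{m}\big)$ and $B_{m} = \tfrac12\big(1-(1-2q)^{m}\big)$. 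Substituting $q=\tfrac{2p}{3}$, so $1-2q = 1-\tfrac{4p}{3}$, and expanding $(1-\tfrac{4p}{3})^{m} = 1 - \tfrac{4mp}{3} + \mathcal{O}(p^{2})$ yields $A_{m} = 1-\tfrac{2mp}{3} + \mathcal{O}(p^{2})$ and $B_{m} = \tfrac{2mp}{3} + \mathcal{O}(p^{2})$, which is precisely~(\ref{eq:modifiedDepolarizeUptoMTimes}). Obtaining the exact closed form before expanding is deliberate: it prevents the $\mathcal{O}(p^{2})$ remainder from silently compounding across the $m$ compositions.

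For the expectation value, I would apply $\text{Tr}\{O\,\cdot\,\}$ to~(\ref{eq:modifiedDepolarizeUptoMTimes}) and use linearity of the trace together with $(\rho X)^{T} = X^{T}\rho^{T}$, which rewrites the noise term as $\text{Tr}\{O Z X^{T}\rho^{T} X Z\}$; this reproduces~(\ref{eq:expvalofmodifiedm}). The $\mathcal{O}(p^{2})$ contribution is retained implicitly: $\text{Tr}\{O\,\cdot\,\}$ applied to an $\mathcal{O}(p^{2})$ matrix is bounded in absolute value by $\|O\|$ times an $\mathcal{O}(p^{2})$ quantity, so it does not change the stated error order.

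I do not anticipate a real obstacle: the statement amounts to the fact that a convex combination of the identity channel with an involution is diagonalized trivially, and the rest is bookkeeping. The only point needing care is the $\mathcal{O}(p^{2})$ accounting under $m$-fold composition, which the exact-then-expand route handles cleanly. A shorter but slightly less rigorous alternative is a direct induction on~(\ref{eq:modifiedDepolarizeUptoMTimes}) in which the cross term $q\cdot\tfrac{2mp}{3}\,\Phi^{2}(\rho)$ that arises on applying $\mathcal{E}$ once more is simply discarded as $\mathcal{O}(p^{2})$ at each step; this reaches the same first-order expression and does not even require $\Phi^{2}=\mathrm{id}$, only that $\Phi$ is bounded.
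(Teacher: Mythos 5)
Your proof is correct, and it is a genuinely different route from the paper: the paper states this lemma without an analytical proof, relying instead on the numerical comparison in Section III (the scatter plots of expectation-value differences across $p$ and $m$), with the implicit derivation being the naive ``iterate and drop $\mathcal{O}(p^2)$ terms'' induction you mention as your fallback. Your argument is stronger. By writing $\mathcal{E} = (1-q)\,\mathrm{id} + q\,\Phi$ with $q = \tfrac{2p}{3}$, checking that $\Phi(\rho) = ZX^{T}\rho^{T}XZ$ is a linear involution (the transposes cancel under double application since $X^{T}=X$, $Z^{T}=Z$, and $(ZX)^2=(XZ)^2=-\mathbb{I}$), and solving the two-term recursion exactly, you obtain the closed-form coefficients $A_m = \tfrac12\bigl(1+(1-\tfrac{4p}{3})^m\bigr)$ and $B_m = \tfrac12\bigl(1-(1-\tfrac{4p}{3})^m\bigr)$ before expanding, which both reproduces~(\ref{eq:modifiedDepolarizeUptoMTimes})--(\ref{eq:expvalofmodifiedm}) and controls how the $\mathcal{O}(p^2)$ remainder accumulates over the $m$ compositions --- something the paper never addresses. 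Two minor points: your parenthetical ``$(ZX)^{2}\rho((ZX)^{\dagger})^{2}=\rho$'' silently drops the transpose in $\Phi$, so the honest justification is the computation $\Phi^2(\rho) = (ZX)^2\rho\,(XZ)^2 = \rho$ with the transposes cancelling, which you do gesture at; and note that keeping the transpose (rather than the paper's shortcut ``$\rho=\rho^{T}$,'' which fails for complex off-diagonal entries) is what makes your superoperator genuinely equal to the standard channel on arbitrary $\rho$, so your iteration is of the right map.
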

In the following section, we verify that the results obtained from (\ref{eq:expvalofmodifiedm}) and the standard channel simulations are negligible. We will demonstrate through experimental evidence that~(\ref{eq:modifiedDepolarize})and~(\ref{eq:modifiedDepolarizeUptoMTimes} and \ref{eq:expvalofmodifiedm}) can be effectively used for training machine learning models.

\section{Experiment}\label{sec:exp}
We start this section by showing that the results from ~(\ref{eq:expvalofmodifiedm}) are consistent with simulation results of~(\ref{eq:generalDepolarize}) for multiple values of $m$ and $p$.  Later, we empirically show that the Depolarization rate up to threshold $w$ does not affect the performance of the single qubit QML model for the iris dataset. For the scope of this experiment, we are considering the binary classification. We used the first, Setosa,  and third, Virginia, flower classes and only the first 2 features, Sepal length and sepal width. We conducted the experiment on Python with Pennylane for quantum circuits simulations and quantum computation. We would like to mention that otherwise mentioned, we used ~(\ref{eq:modifiedDepolarize}) for the depolarization channel and ~(\ref{eq:expvalofmodifiedm}) for the depolarization channel applied $m$ times.

\subsection{Quantum Circuit behavior analysis under Depolarization channel up to m times}
\begin{figure*}
     \centering
     \begin{subfigure}[b]{0.32\textwidth}
         \centering
         \includegraphics[width=\textwidth]{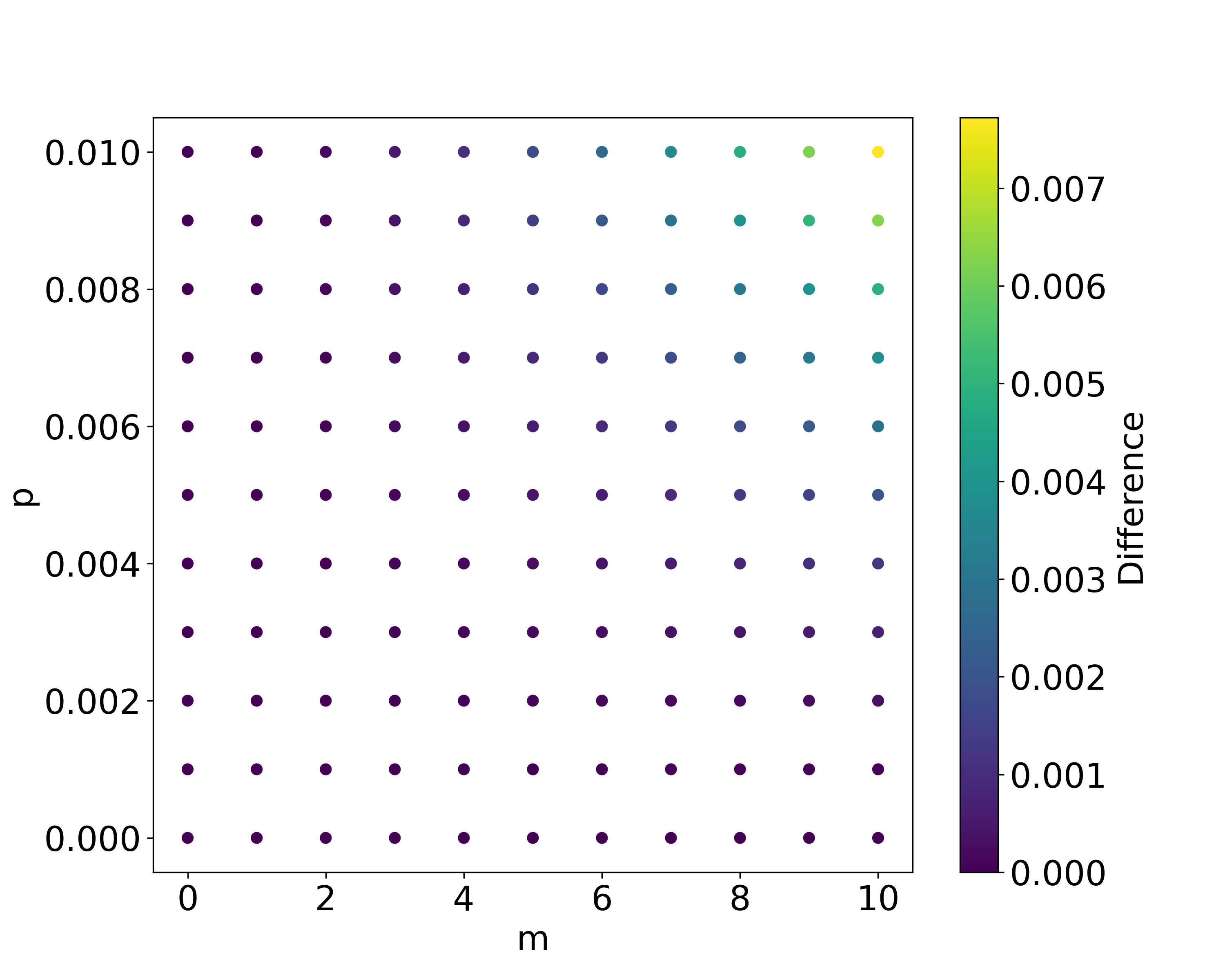}
         \caption{}
         \label{fig:multiple_p_m_3gates}
     \end{subfigure}
     \hfill
     \begin{subfigure}[b]{0.32\textwidth}
         \centering
         \includegraphics[width=\textwidth]{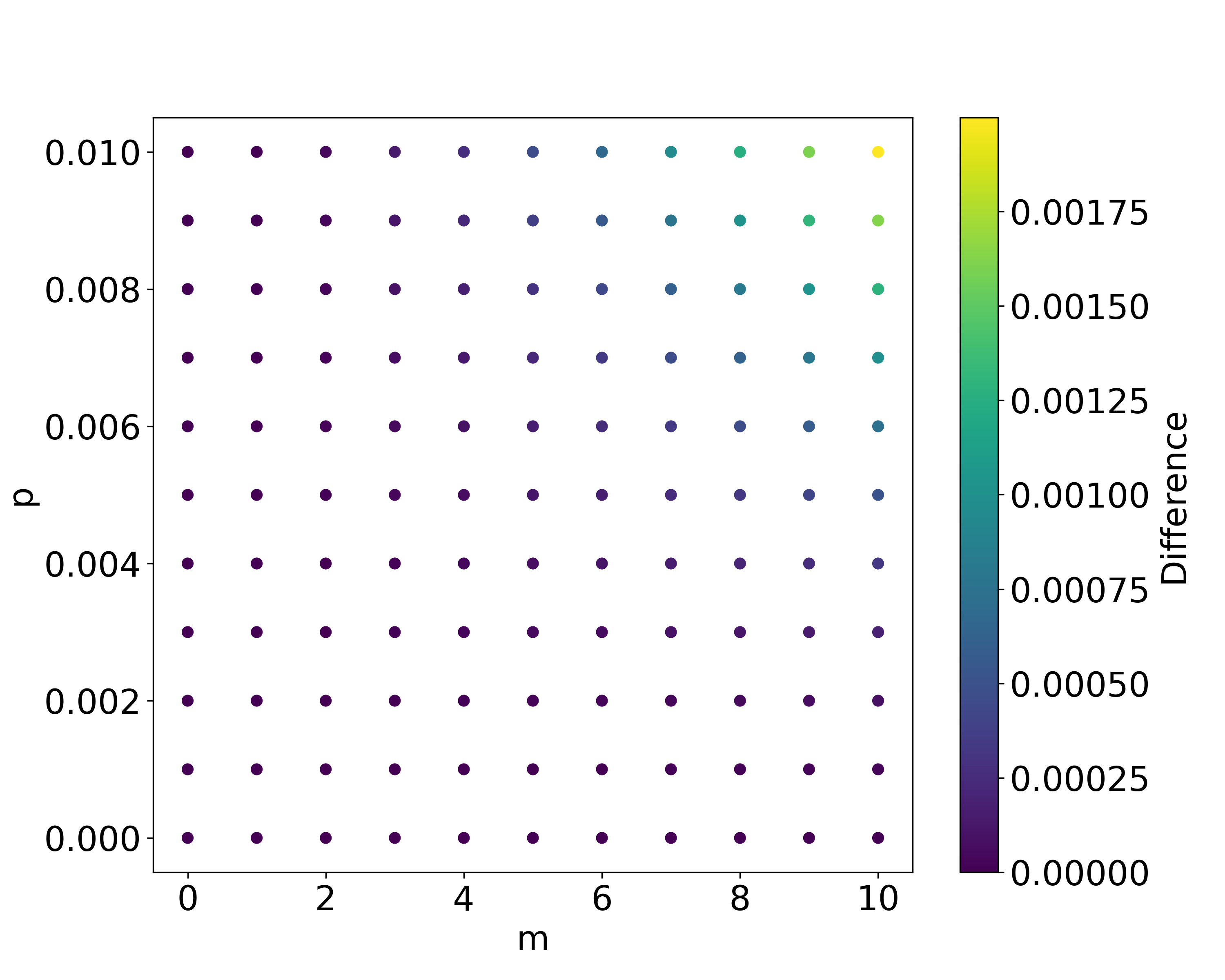}
         \caption{}
         \label{fig:multiple_p_m_8gates}
     \end{subfigure}
     \hfill
     \begin{subfigure}[b]{0.32\textwidth}
         \centering
         \includegraphics[width=\textwidth]{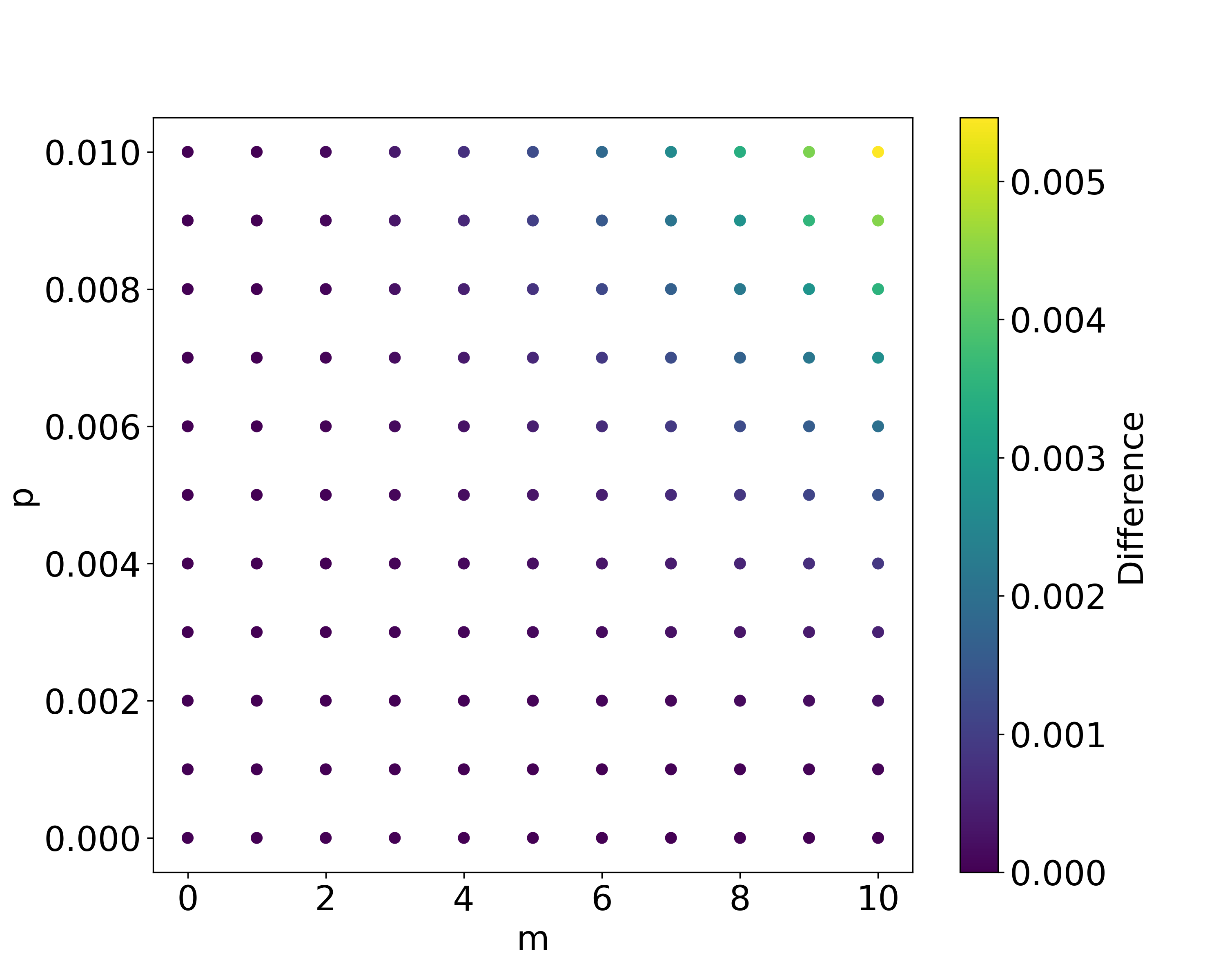}
         \caption{}
         \label{fig:multiple_p_m_15gates}
     \end{subfigure}
        \caption{Scatter plots present the difference between the standard channel and modified depolarization channel expectation value. Each channel was applied to a quantum circuit with single qubit gates of $3$, $8$, and $15$ respectively. The result for $3$ single qubit gates is presented in plot (a), while plot (b) and plot (c) represent the results for $8$ and $15$ gates circuits, respectively. The x-axis of each plot represents the number of times the noisy channel was applied and is given by $m$, while the y-axis gives the varying depolarization rates.}
    \label{fig:p_m_analysisofmultiplegatessize}
\end{figure*}
Eq.~(\ref{eq:expvalofmodifiedm}), posits that when a depolarization channel with a rate $p$ is applied $m$ times to a quantum state $\rho$, the resultant state $\rho'_{mm}$ adheres to a predictable transformation that maintains linearity with respect to $p$ in the first order for small value of $p$. This suggests that despite the iterative application of noise, the overall system's behavior under the depolarization channel can be approximated linearly. We conduct a series of simulations to substantiate this theory. For this, we computed the expectation values for a single qubit with varying depths—specifically, $3, 8,$ and $15$ gates. The behavior is assessed across different depolarization rates, $p = [0.0, 0.001,0.002,0.003,0.004,0.005,0.006,0.007,0.008, \\ 
0.009,0.01]$, and depolarization channel repetition, $m = [0, 1, 2, 3, 4, 5, 6, 7, 8, 9, 10]$.

Fig.~\ref{fig:p_m_analysisofmultiplegatessize} presents a scatter plot visualization for the expectation value differences between~(\ref{eq:modifiedDepolarizeUptoMTimes}) and the standard depolarization model as a function of both $p$ and $m$.  There is a minimal deviation between the standard and modified channels' expectation value for low depolarization rates across all gate counts. This alignment implies that the modified equation retains fidelity to the standard model's predictions in the low-noise regime. The plots exhibit a uniform trend where, for small values of $p$, the difference in expectation values is negligible across all values of $m$. This negligible difference remains consistent as the number of gates increases, emphasizing the robustness of the modified model. 
Extending on these results, we analyze its performance in QML by training the QML model on the Iris dataset under the modified depolarization channel. We map the classic data into quantum Hilbert space via a feature map.

\subsection{Data Encoding}
\begin{figure*}
  \centering
    \begin{subfigure}[b]{\textwidth} 
    \centering
    \includegraphics[width = \textwidth,trim={10cm 0 0 0},clip]{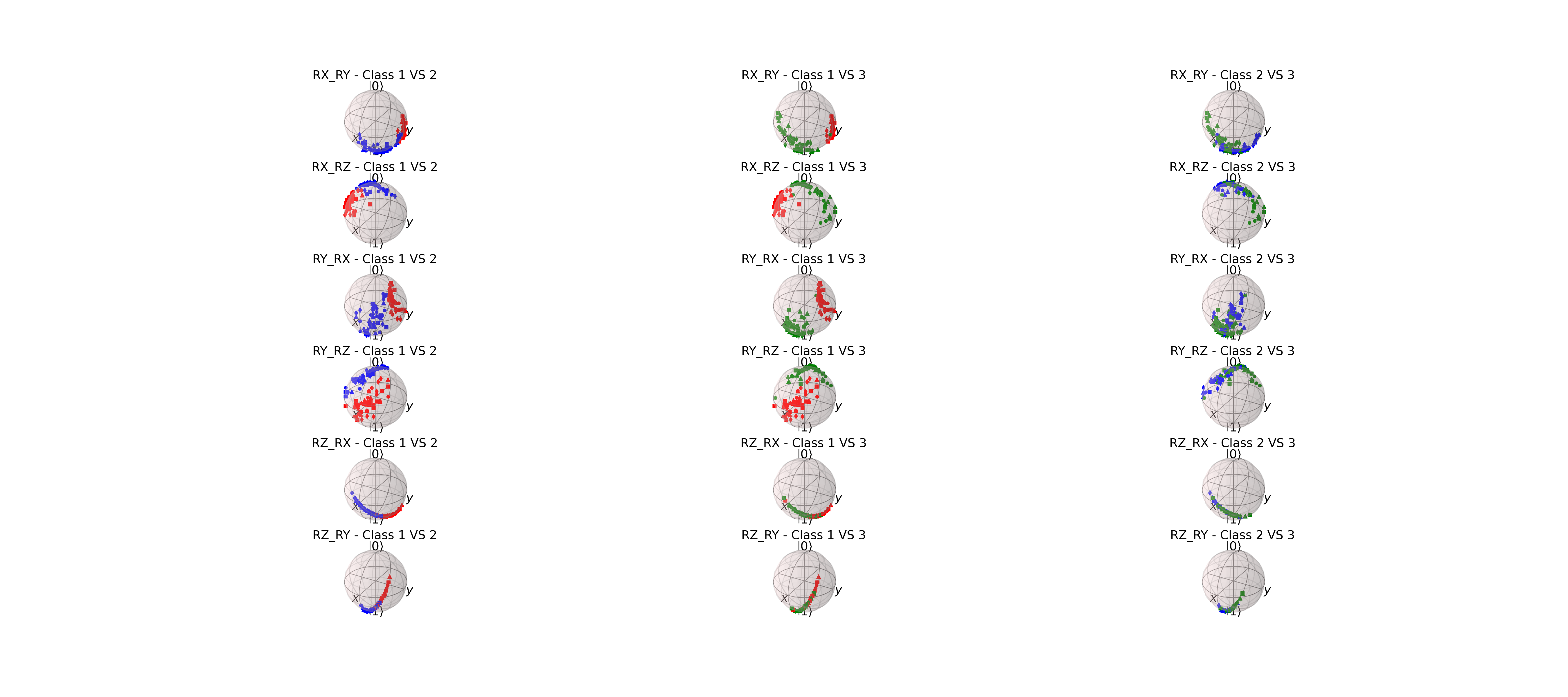}
    \caption{Various encoding schemes for single qubits using the rotational encoding. The combination of $RZ$ and $RX$ gates provides the best mapping for binary classification.}
    \label{fig:allencoding}
  \end{subfigure}
  \vfill
    \begin{subfigure}[b]{0.45\textwidth} 
    \centering
    \includegraphics[width=\textwidth,trim={0 9cm 0 9cm},clip]{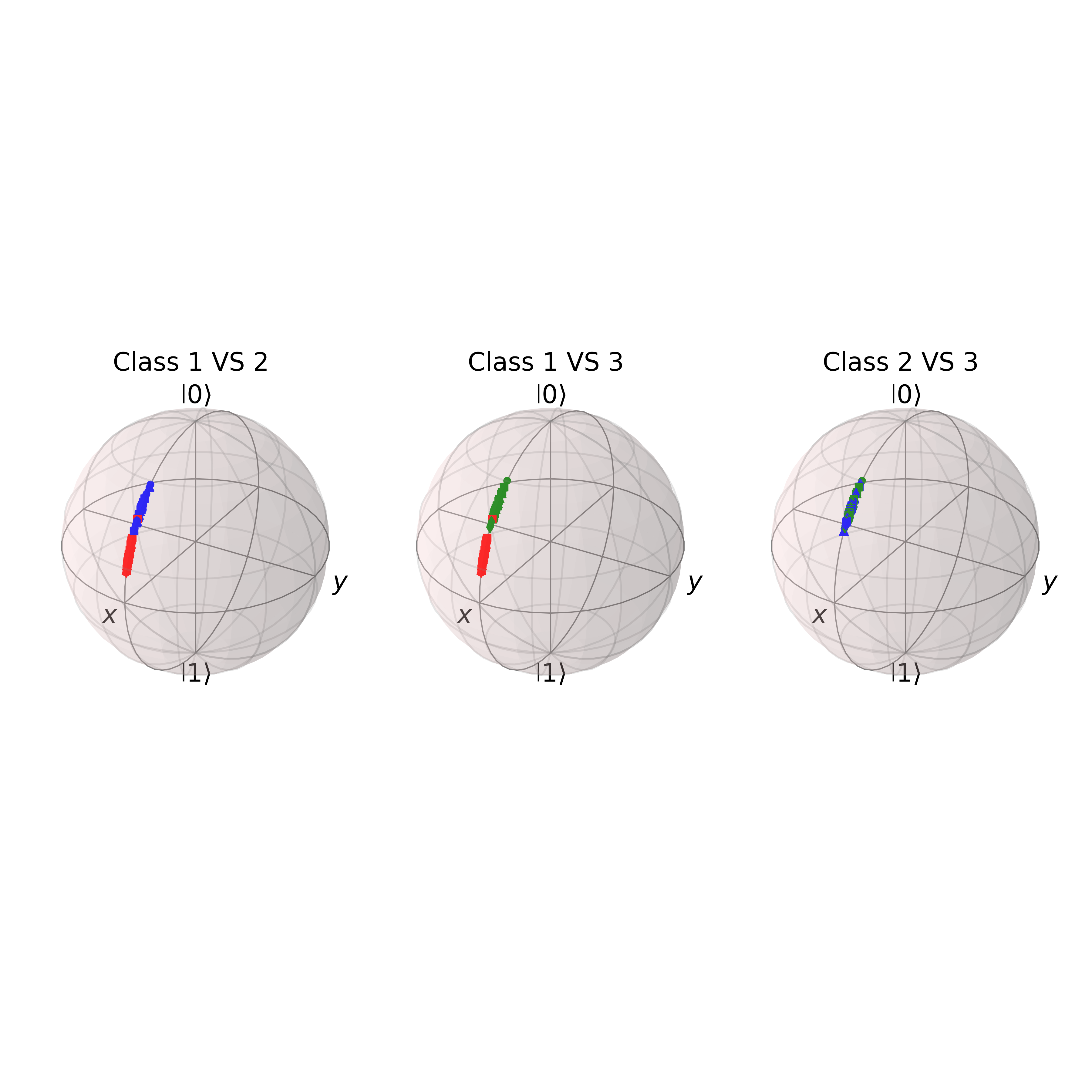} 
    \caption{Bloch Sphere representation of the quantum states obtained by Amplitude encoding of the features vectors}
    \label{fig:AmplitudeEncoding}
  \end{subfigure}
  \hfill 
  \begin{subfigure}[b]{0.45\textwidth} 
    \centering
    \includegraphics[width=\textwidth,trim={0 9cm 0 9cm},clip]{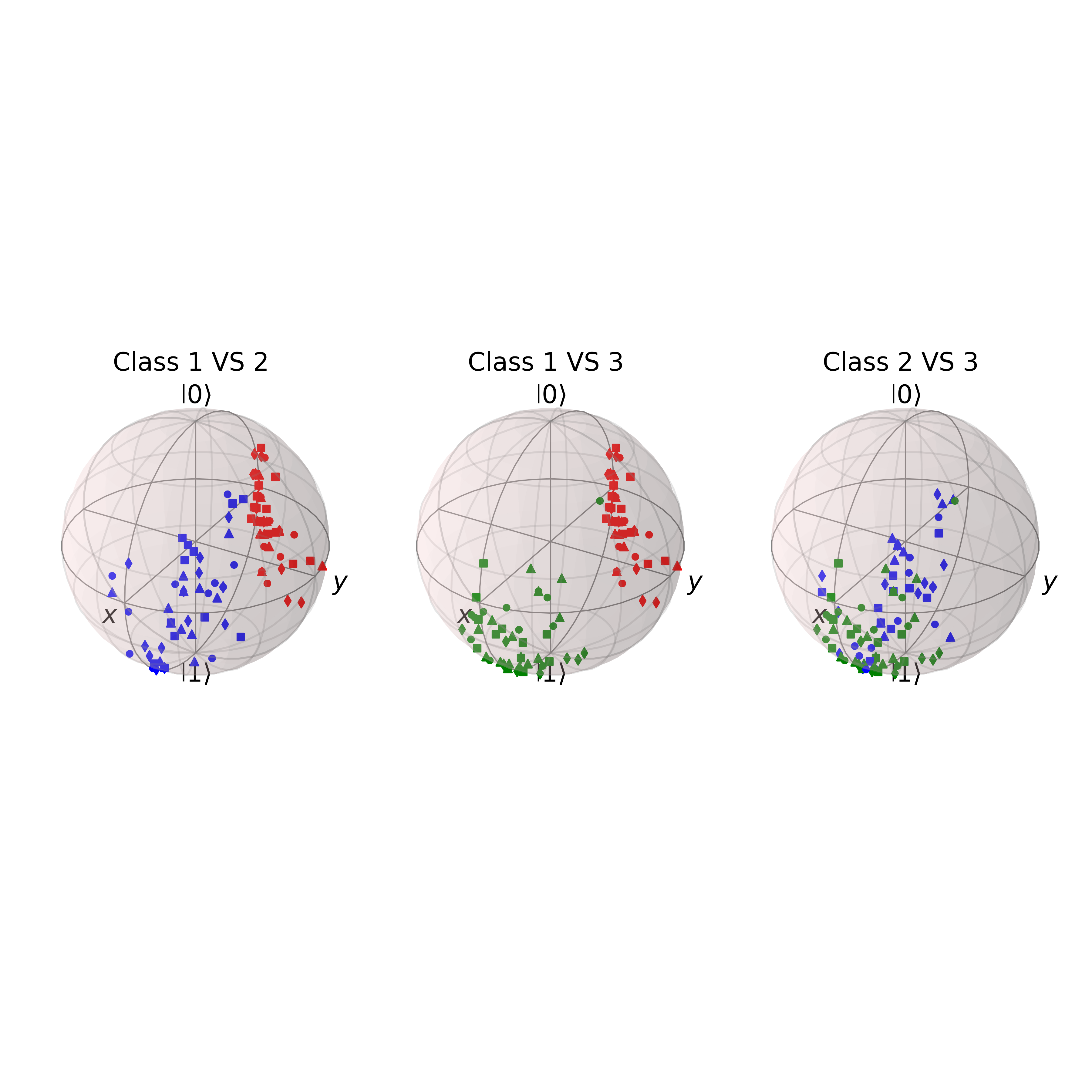} 
    \caption{Bloch Sphere representation of the quantum states obtained by Angle encoding of the features vectors}
    \label{fig:RotationEncoding}
  \end{subfigure}
  
  \caption{Feature Mapping of the Iris dataset using Amplitude Encoding and Rotational encoding method. The Rotational encoding scheme, a combination of $RX$ and $RY$, provides better mapping results for the classification problem.}
  \label{fig:AmpandRotEncoding}
\end{figure*}

We start by initializing the qubit in a computational basis $\ket{0} = \begin{bmatrix}
    1\\
    0
\end{bmatrix}$. Let $\phi:x\mapsto \phi(x)$ be a mapping from input space $X$ to a quantum Hilbert Space $\mathbb{R}$. Let us define $\phi(x) = RX(x_1)RY(x_0)$, where $x_0$ and $x_1$ are features of the input vector, and $$
\begin{array}{cc}
RY(x_0) = \begin{bmatrix}
    cos(\frac{x_0}{2}) & -sin(\frac{x_0}{2}) \\
    sin(\frac{x_0}{2}) & cos(\frac{x_0}{2})
\end{bmatrix} , \\
RX(x_1) = \begin{bmatrix}
    cos(\frac{x_1}{2}) & -i sin(\frac{x_1}{2}) \\
    -i sin(\frac{x_1}{2}) & cos(\frac{x_1}{2}) ,
\end{bmatrix}
\end{array}$$are rotational single qubit quantum gates. 
We chose the angle encoding scheme because it linearly separates input data in the Bloch Sphere better than amplitude encoding, as shown in Fig.~\ref{fig:AmpandRotEncoding}. Thus, for an input vector $x$, the data-encoded state is defined as:
\begin{equation}\label{eq:angleEncoding}
    \ket{\psi} = \phi(x) \ket{0} = RX(x_1)RY(x_0) \ket{0} .
\end{equation}

\subsection{Variational Layers}
Similar to the encoding scheme we applied a series of variational gates (RY and RX with parameters) whose parameters $\theta$ can be optimized during training. For $N$ trainable parameters $\theta$, we define the operation of Variational layers $U$ as:
\begin{equation} \label{eq:variationallayer}
    U(\theta) = \prod_{i=1}^N \text{Gate}_i(\theta_i) ,
\end{equation}
where $\text{Gate}_i(\theta_i)$ is either an $RY$ or $RX$ gate with parameter $\theta_i$. 
Thus, we can define a variational circuit as:
\begin{equation}\label{eq:VariationalCircuit}
    \ket{\Psi} = U(x,\theta) = U(\theta)\ket{\psi} .
\end{equation}
Let $\rho = \ket{\Psi}\bra{\Psi}$ be a density matrix. The system undergoes evolution through a depolarization channel. This channel, denoted as $\mathcal{E}(\rho)$, transforms the state be $\rho$ of the qubit by mixing it towards a maximally mixed state as the depolarization rate $p$ increases as given by equations~(\ref{eq:generalDepolarize}) and (\ref {eq:modifiedDepolarize}). Let this state $\rho'$. Now we measure an observable Pauli$-Z = \begin{bmatrix}
    1 & 0 \\
    0 & -1
\end{bmatrix}$ matrix to get a quantum machine learning model that can be defined as:

\begin{equation}\label{eq:qmlModel}
    f(x,\theta) = \text{Tr}\{Z \rho'\} .
\end{equation}
In the following section, we train this QML model on the Iris dataset under a modified depolarization channel and present its result.
\subsection{Training}
 \begin{figure*}
\begin{subfigure}[b]{0.52\textwidth}
\includegraphics[width=0.9\linewidth]{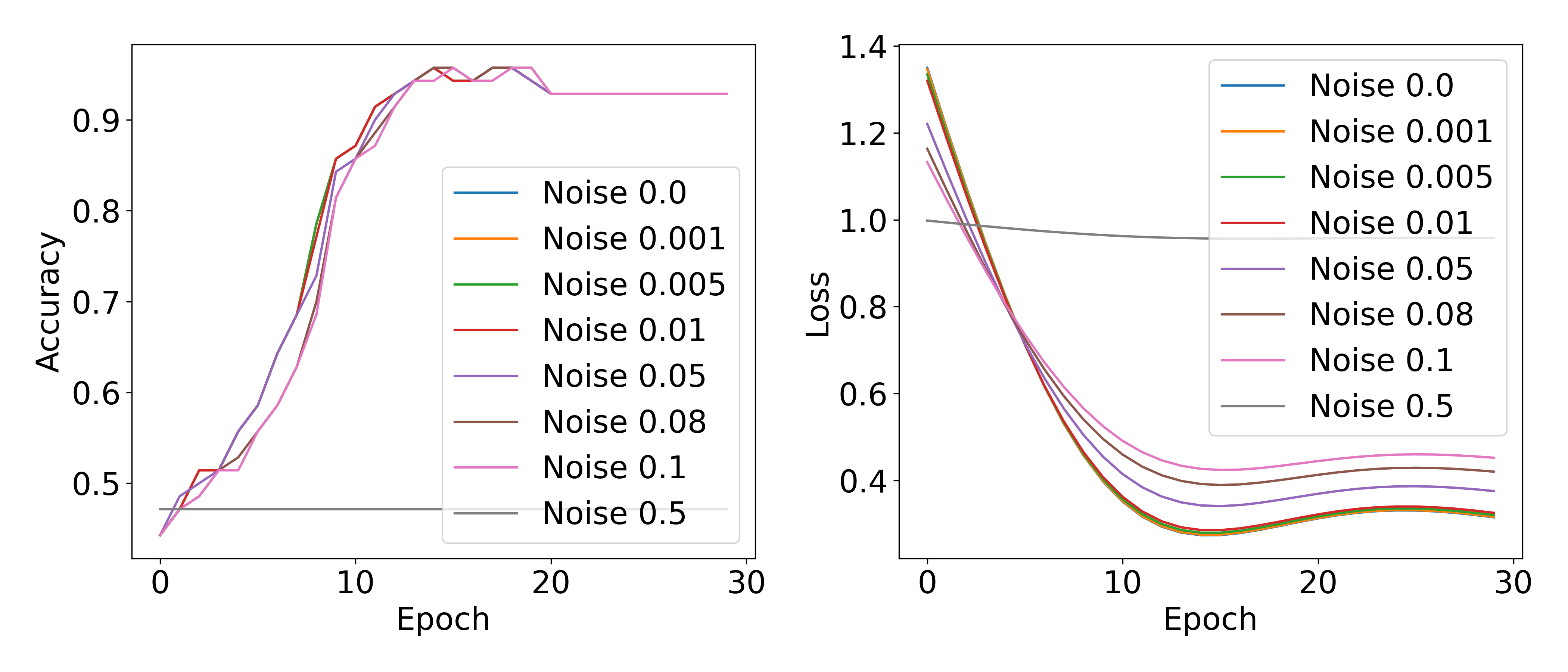} 
\caption{}
\label{fig:acc1}
\end{subfigure}
\hfill
\begin{subfigure}[b]{0.48\textwidth}
\includegraphics[width=0.9\linewidth]{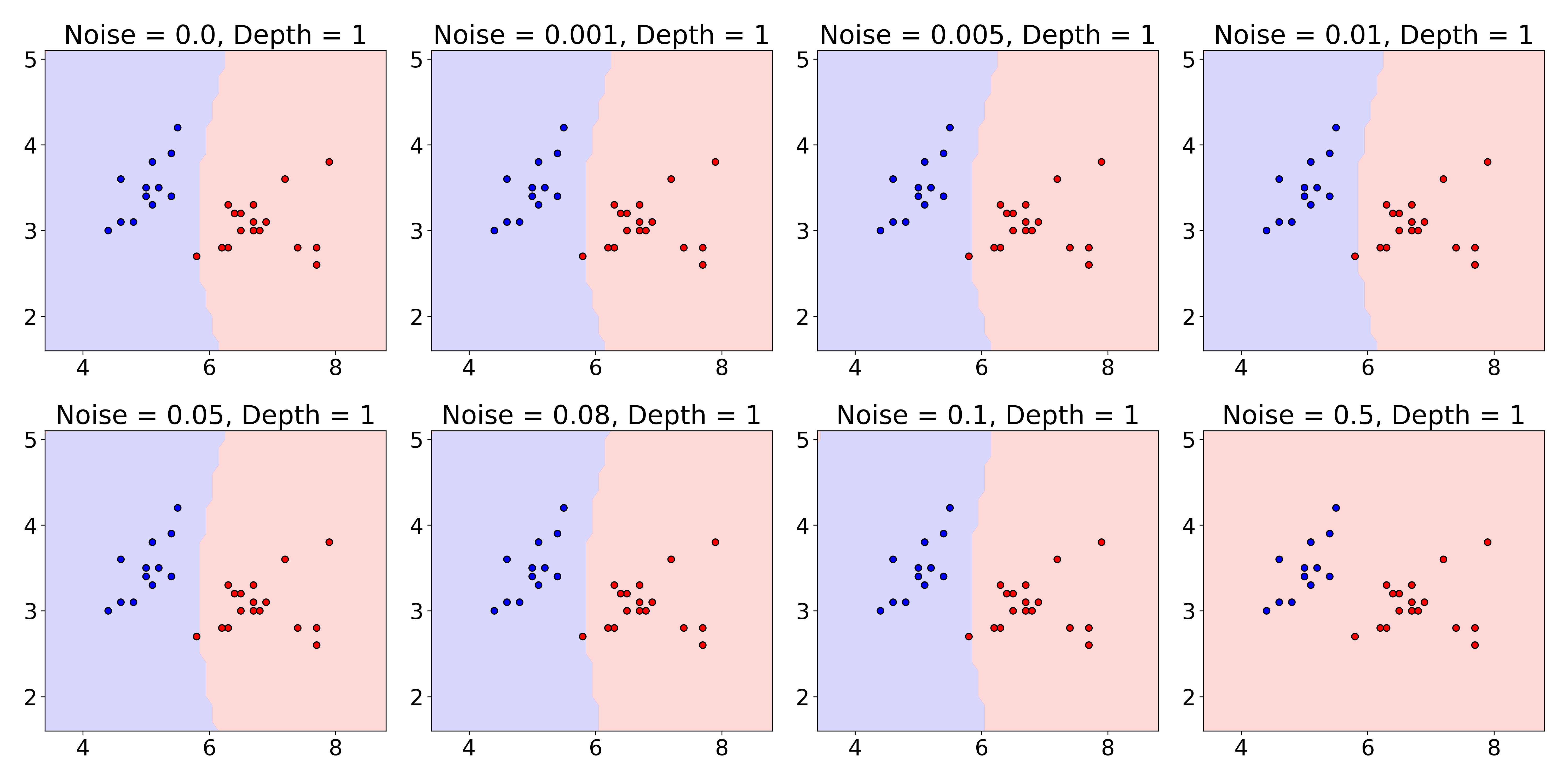}
\caption{}
\label{fig:dec1}
\end{subfigure}
\vfill

\begin{subfigure}[b]{0.52\textwidth}
\includegraphics[width=0.9\linewidth]{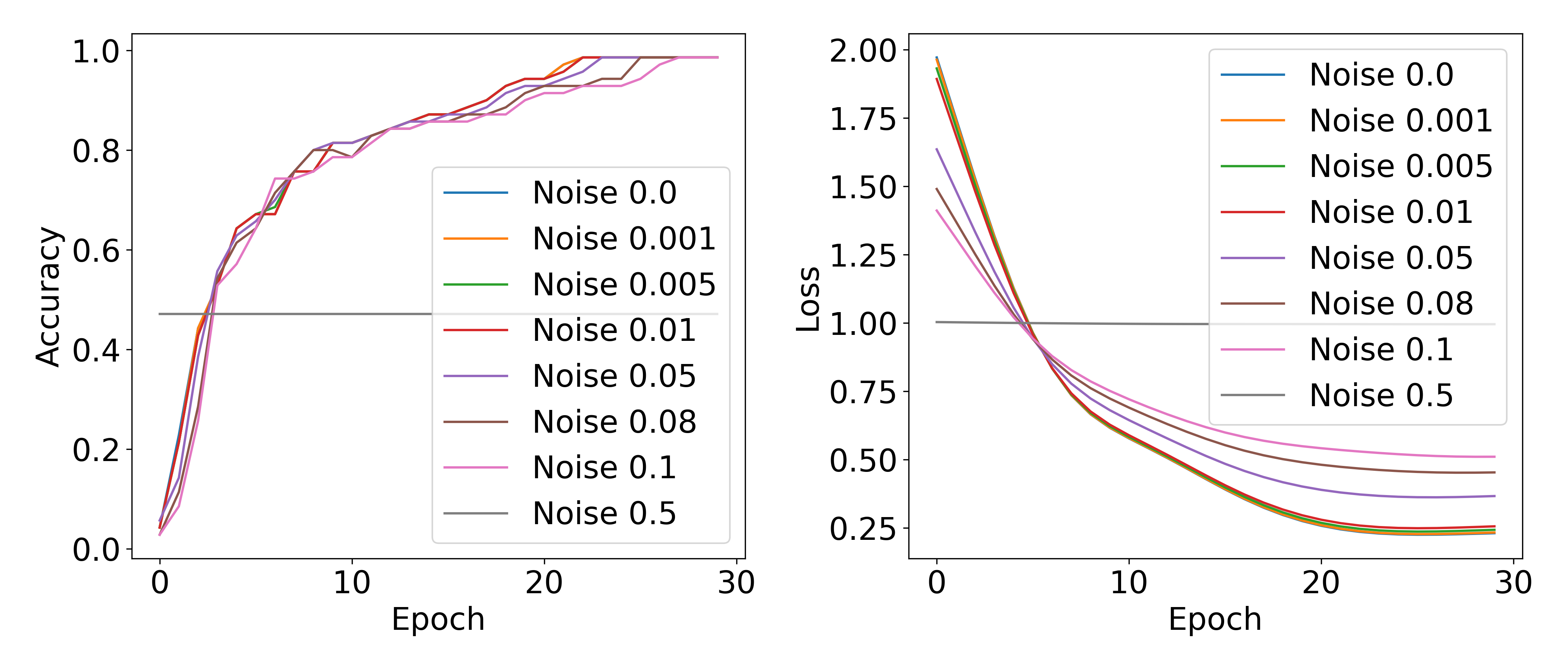} 
\caption{}
\label{fig:acc3}
\end{subfigure}
\hfill
\begin{subfigure}[b]{0.48\textwidth}
\includegraphics[width=0.9\linewidth]{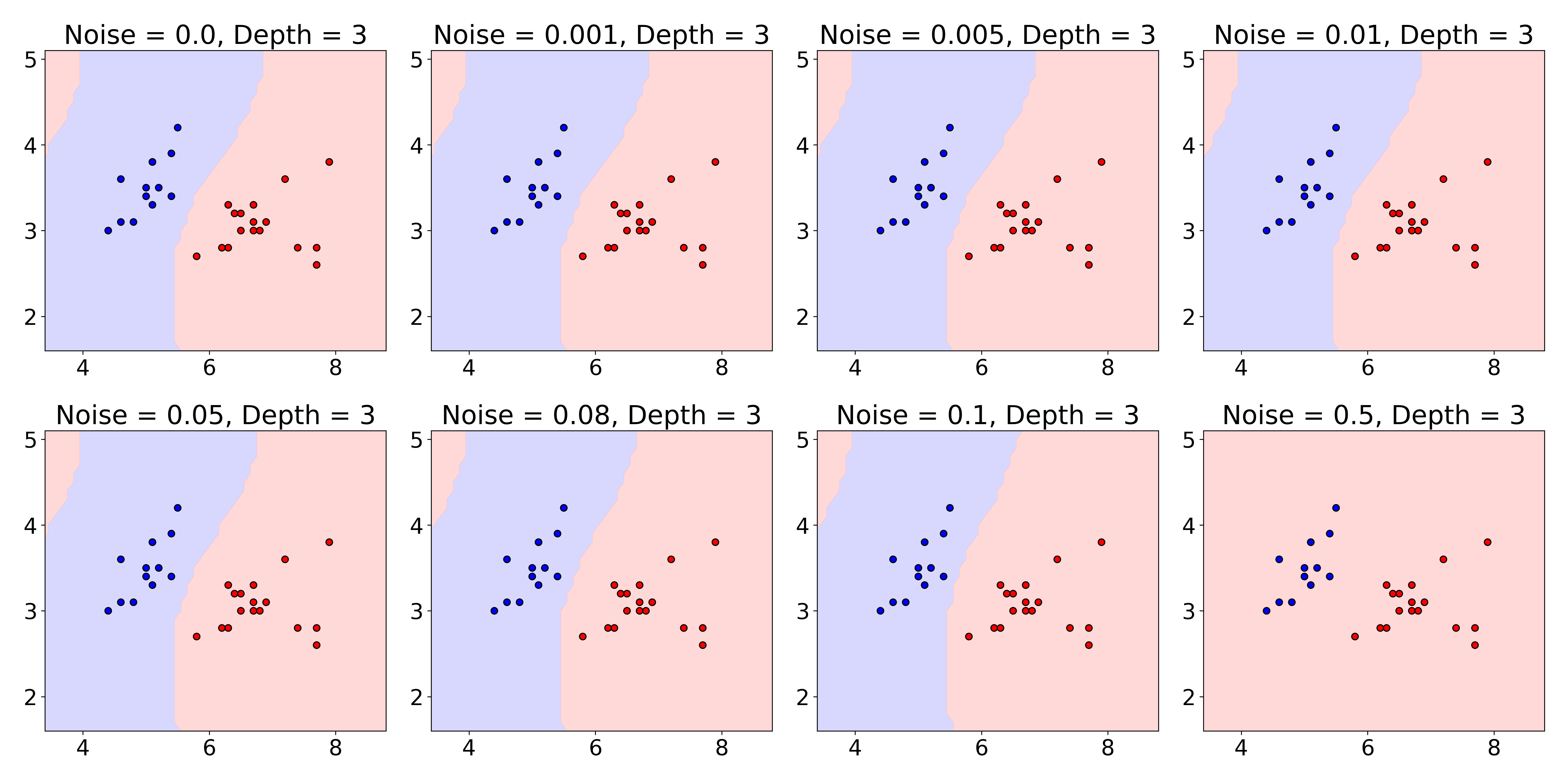}
\caption{}
\label{fig:dec3}
\end{subfigure}
\vfill

\begin{subfigure}[b]{0.52\textwidth}
\includegraphics[width=0.9\linewidth]{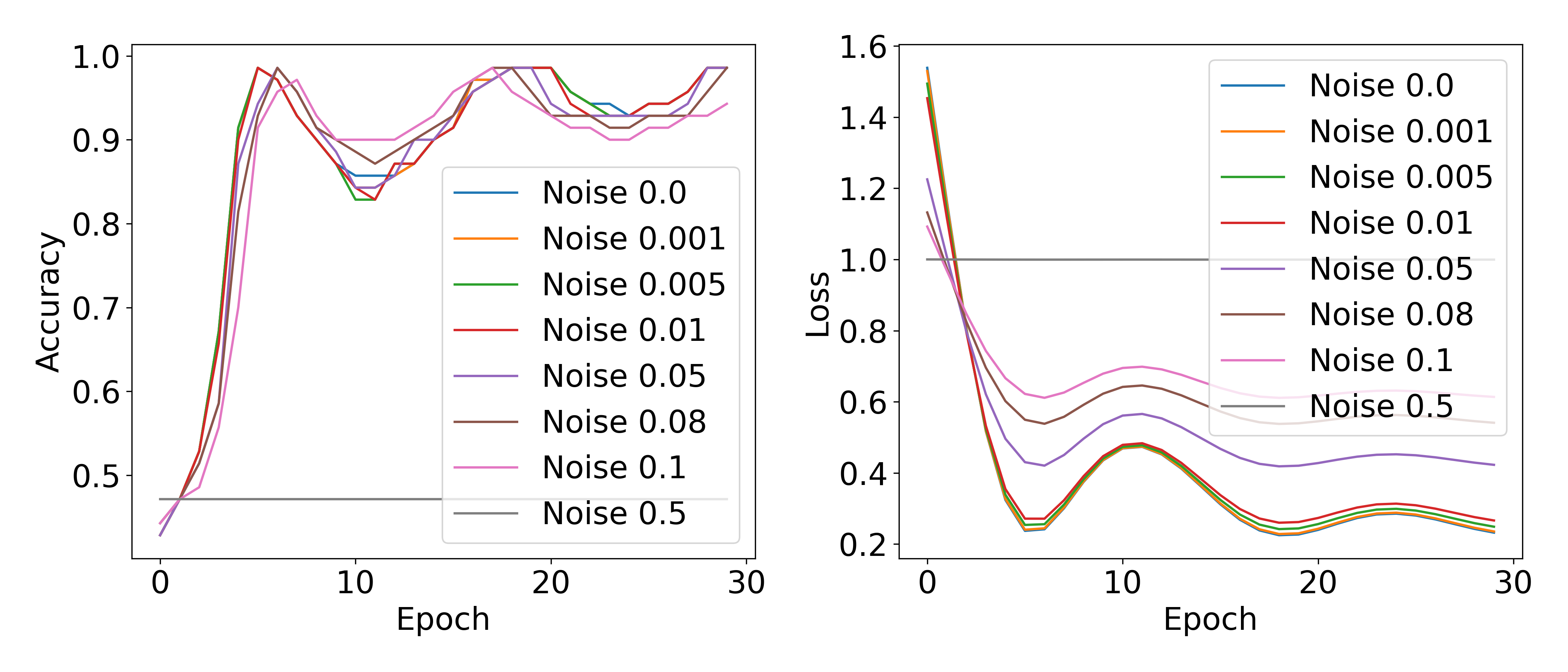} 
\caption{}
\label{fig:acc5}
\end{subfigure}
\hfill
\begin{subfigure}[b]{0.48\textwidth}
\includegraphics[width=0.9\linewidth]{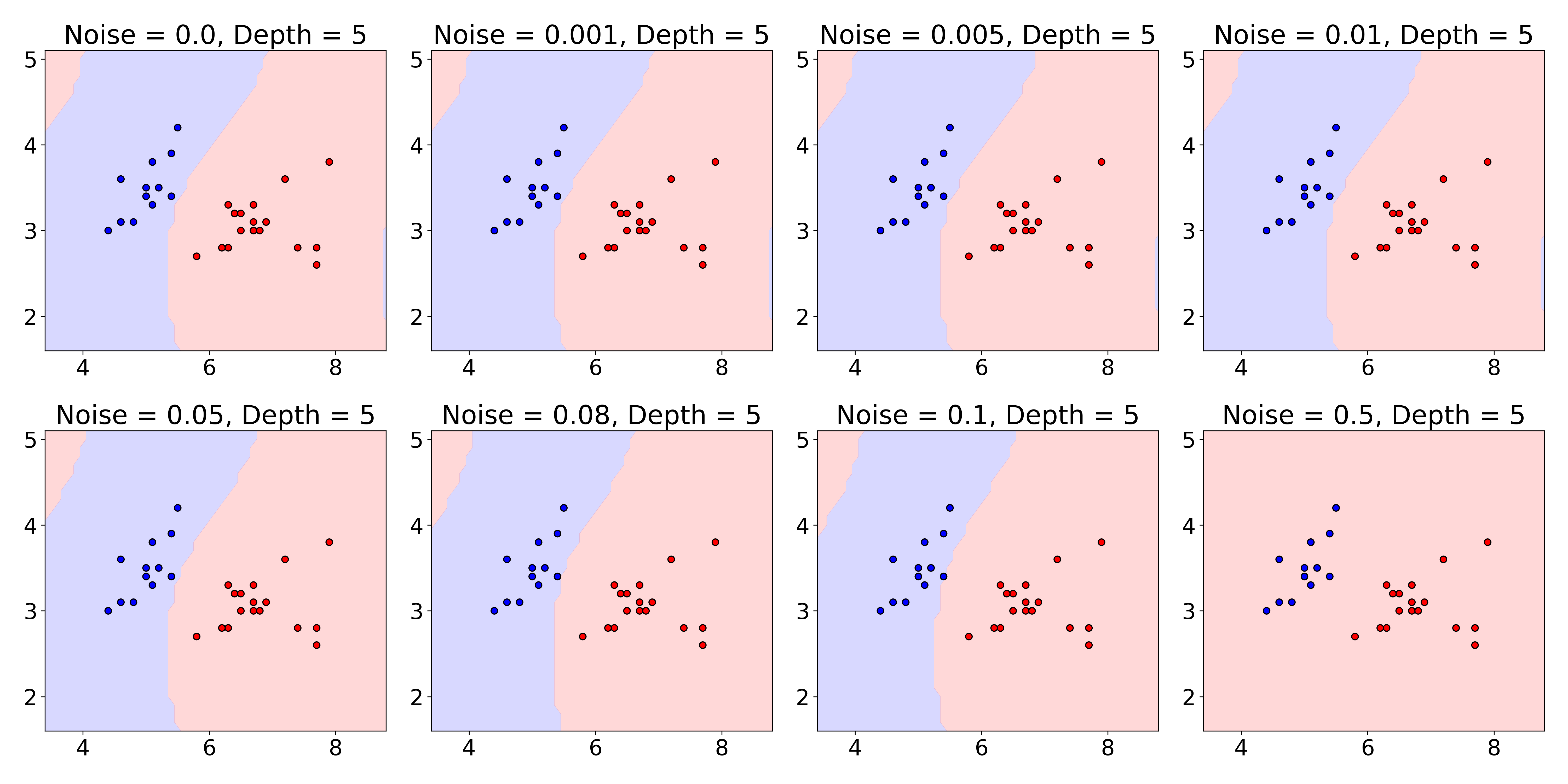}
\caption{}
\label{fig:dec5}
\end{subfigure}
\vfill

\begin{subfigure}[b]{0.52\textwidth}
\includegraphics[width=0.9\linewidth]{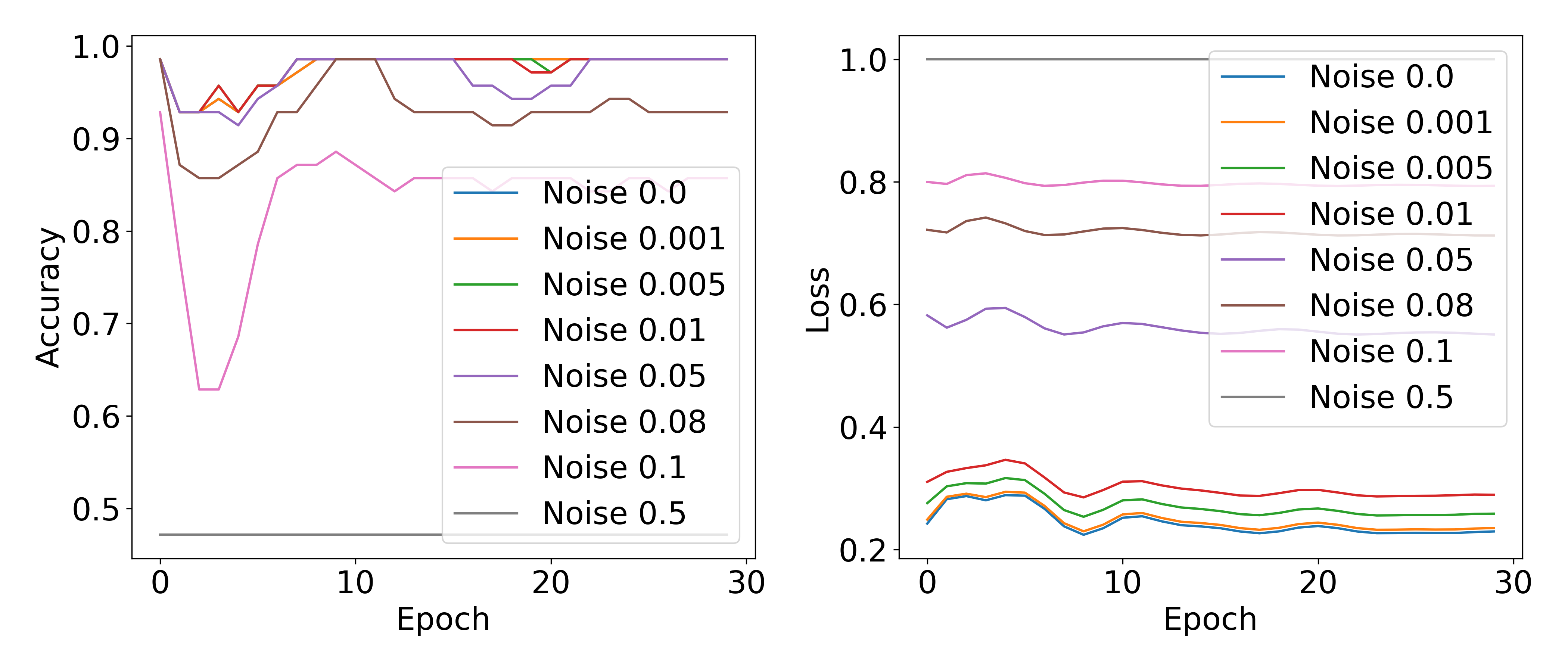} 
\caption{}
\label{fig:acc10}
\end{subfigure}
\hfill
\begin{subfigure}[b]{0.48\textwidth}
\includegraphics[width=0.9\linewidth]{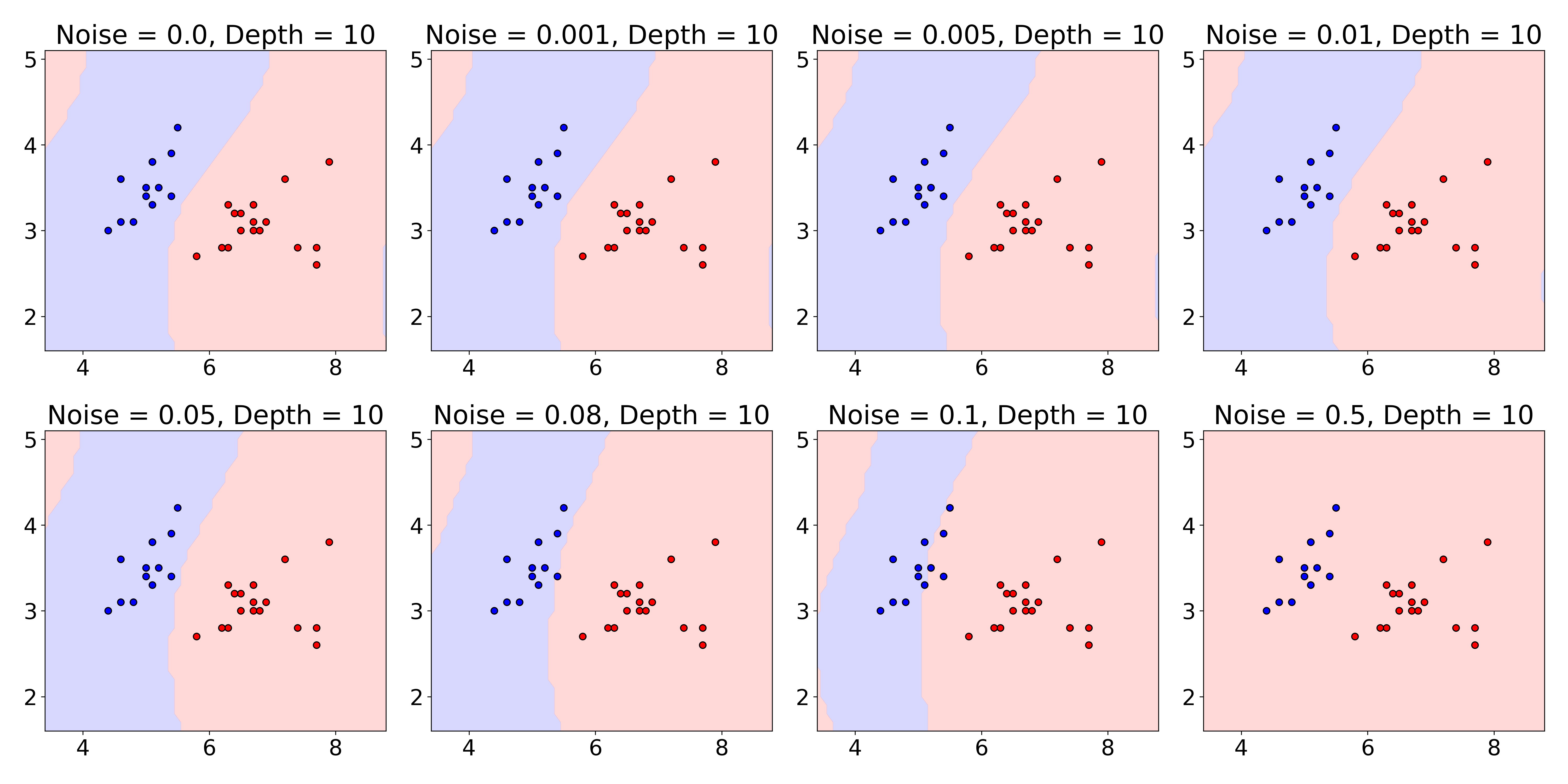}
\caption{}
\label{fig:dec10}
\end{subfigure}
\vfill

\begin{subfigure}[b]{0.52\textwidth}
\includegraphics[width=0.9\linewidth]{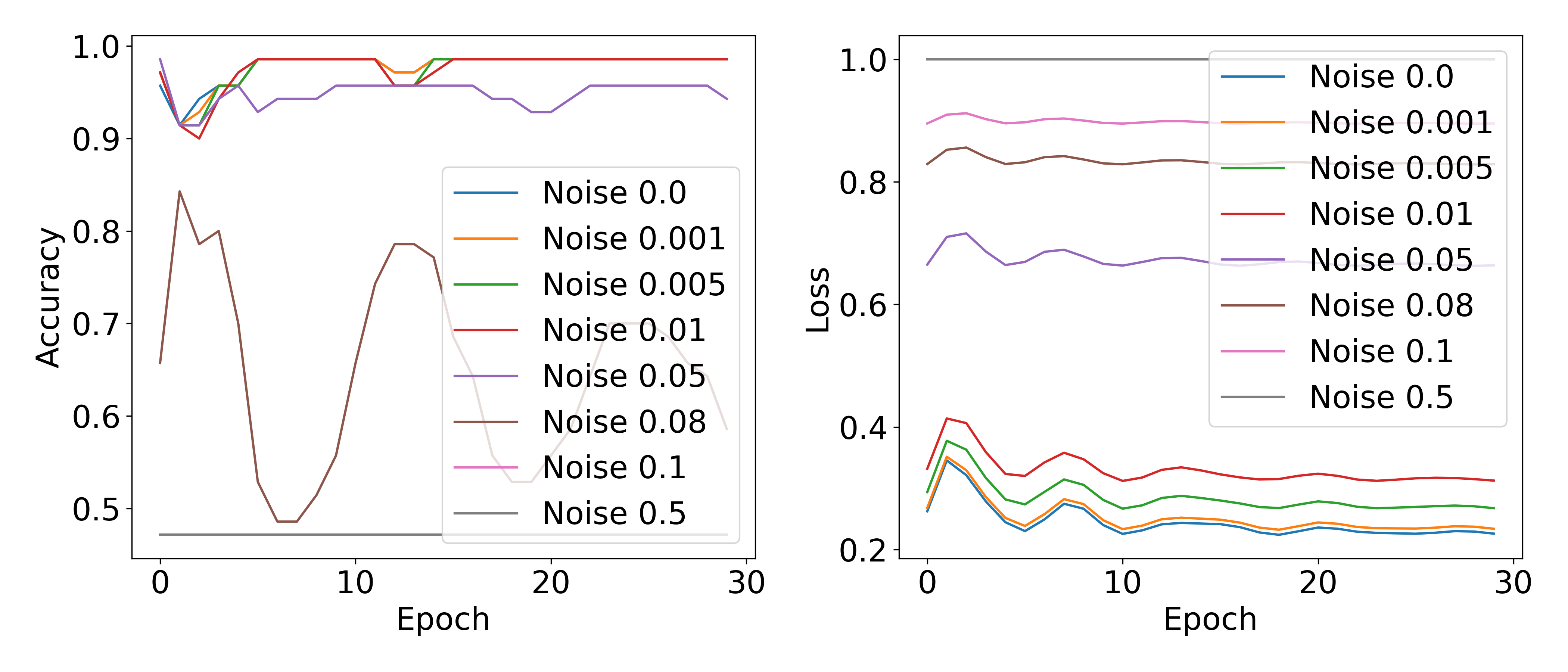} 
\caption{}
\label{fig:acc15}
\end{subfigure}
\hfill
\begin{subfigure}[b]{0.48\textwidth}
\includegraphics[width=0.9\linewidth]{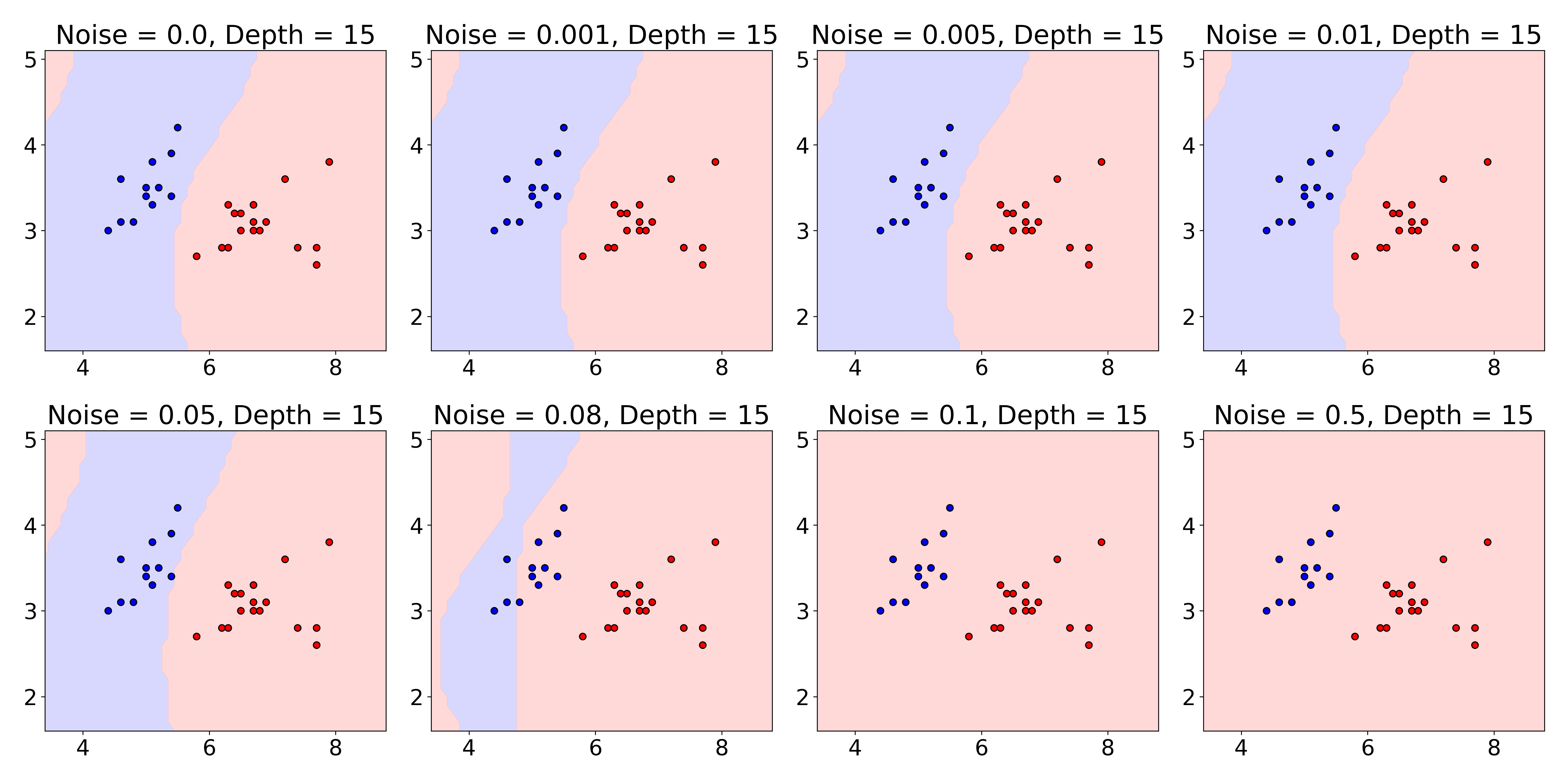}
\caption{}
\label{fig:dec15}
\end{subfigure}
\caption{Experimental results for decision boundary evolution presented in the right column and training dynamics in the left column for a QML model on the Iris dataset, with varied noise levels ($p$) and depolarization channel applied up to ($m$) times. The decision boundaries are plotted for depths of $1, 3, 5, 10,$ and $15$, at noise levels ranging from$ 0.0$ to $0.5.$ The results across rows are presented in chronological order in circuit depth. Accuracy and loss graphs display the model's performance over $30$ epochs that highlight the impact of noise rate and circuit depth on learning efficacy.}
\label{fig:all}

\end{figure*}


We trained the model for various values of m=[$1,3,5,10,15$]
and  p=[0$.0, 0.001, 0.005, 0.01, 0.05, 0.08, 0.1,0.5$] 
The initial learnable parameters were chosen randomly. We used the Adam optimizer with a $0.1$ step size, square loss as a loss function, and a parameter-shift rule to compute the gradients. We trained the model for $30$ epochs for each combination of $p$ and $m$.

The pictorial representation of the depolarization noise at each layer of the circuit and its influence on the QML model training is presented in Fig.~\ref{fig:all}. The decision boundaries presented in the right column of the figure illustrate the model's behavior across varying levels of noise and circuit depths.  We observe that with an increase in circuit depth, while the expressivity of the model may enhance, there is a concurrent increase in its vulnerability to noise, which adversely affects the quality of the decision boundary. Notably, the model's performance appears to be robust for circuit depths ranging from $ 1-5$. The model is capable of creating a decision boundary at a depth of $5$, even when the depolarization rate is $0.1$ or at a depth of $15$ for a $0.05$ error rate. This consistent performance of the model at lower noise levels across all circuit depths indicates that QML models can be robust to noise up to a certain threshold.

However, we observe that the model has difficulty making accurate predictions, even with one or three trainable gates when the noise level is $0.5$ or higher. This indicates a possible threshold for both circuit depth and noise level that maximizes QML model performance.  In our experiments, reaching a plateau at a depth of five suggests that the model's capacity for feature representation may be sufficiently saturated by a depth of five. Beyond this point, we observed a decline in precision, F1 score, and accuracy. Hence, the results indicate that while QML models exhibit robustness in lower noise environments and at shallower circuit depths, their performance diminishes with increased circuit complexity and higher noise levels.

In the next section, we discuss the advantages and limitations of a modified depolarization channel and discuss the trade-off between model complexity and noise resilience.

\section{Discussions}\label{sec:discussion}
We argue that~(\ref{eq:modifiedDepolarize}) and (\ref{eq:expvalofmodifiedm}) provide the computation advantage for simulating depolarization noise. As long as $p$ is small, similar to (\ref{eq:expvalofmodifiedm}), we can define the density matrix of a system under depolarization channel up to $m$ times for (\ref{eq:generalDepolarize}) as follows:
\begin{equation}\label{eq:generaluptom}
    \rho'_m = (1 - mp) \rho + \frac{mp}{3} (X\rho X + Y \rho Y + Z \rho Z) + \mathcal{O}(p^2) .
\end{equation}
The expectation value of an observable $O$ can be defined as:
\begin{multline}\label{eq:generalmexp}
    \langle O \rangle_{ \rho'_m} = \text{Tr}\{O \rho'_m\} = \text{Tr}\{O \rho \} - mp \text{Tr} \{O\rho\} + \\
    \frac{mp}{3} \left[\text{Tr}(OX\rho X) + \text{Tr}(OY\rho Y) + \text{Tr}(OZ \rho Z)\right] .
\end{multline}

We conducted a study to analyze the computational efficiencies of two depolarizing channel models applied to quantum states. These models were given by (\ref{eq:generalDepolarize}) and (\ref {eq:modifiedDepolarize}). We focused on the matrix multiplication overhead and the operational requirements, specifically the use of Pauli gates. 

Our findings suggest that the standard depolarizing channel, given by (\ref{eq:generalDepolarize}) and (\ref {eq:generaluptom}), requires a significantly higher computational investment. It requires six matrix multiplications for state evolution and an additional four for computing expectation values, as shown in (\ref{eq:generalmexp}), totaling ten multiplications. This approach also demands the use of all three Pauli gates $(X, Y, Z)$, which can introduce complexity in gate operations and potential errors in practical quantum computing environments. 

On the other hand, the modified depolarizing channel, given by (\ref{eq:modifiedDepolarize}) and (\ref {eq:modifiedDepolarizeUptoMTimes}), presents a more efficient alternative. It requires only four matrix multiplications for state evolution and two additional ones for expectation value computations, according to (\ref{eq:expvalofmodifiedm}), totaling six multiplications. The modified model also eliminates the need for direct Pauli $Y-$gate applications, thereby simplifying the operational framework. 

Our analysis highlights the modified channel's potential to reduce computational overhead and operational complexity. In practical scenarios, such as cloud-based quantum computing environments, users often face lengthy queues, leading to extended wait times, sometimes spanning hours, for executing a single operation. By reducing the matrix multiplications, our approach effectively reduces the computational load by at least 16 multiplication and 8 addition operations. Considering that a typical user might only manage to perform one operation per hour on real quantum hardware due to these queuing constraints, our method could result in substantial time savings. This makes it ideal for use in quantum algorithms where efficiency and gate operation minimization are important.

We further argue that the modified channel can be extended to train the QML model. The results on QML model behavior demonstrate the nuanced interplay between circuit depth, noise levels, and the model's performance. The results suggest that there exists a level of quantum circuit complexity where the representational power of the model is optimal. However, as we extend the circuit depth beyond this optimal point, we observe diminishing returns in model performance, highlighting a critical trade-off between the expressiveness of deeper quantum circuits and their susceptibility to noise.

\section{Conclusion}\label{sec:conclusion}
Our work presents the computational benefits of using a modified depolarizing channel, suggesting a more resource-efficient alternative to the standard model. The modified channel requires fewer matrix multiplications and simplifies operations by avoiding the use of the Pauli $Y-$gate, thereby providing a practical approach for NISQ-era quantum algorithms. When applied to the training of a QML model on the Iris dataset, this model demonstrates an optimal balance between circuit depth and noise resilience, with performance metrics indicating a peak in representational power at intermediate circuit depths. Beyond this threshold, the model's performance weakens, emphasizing the delicate interplay between circuit depth, noise rate and dataset complexity. The findings of our study not only advance the understanding of depolarizing noise in quantum systems but also guide the development of QML models, ensuring they harness the computational advantage. Our future works include extending the application of the modified depolarization channel to multi-qubit systems and exploring its effectiveness in training complex and higher dimensional datasets.

\section*{Acknowledgements}
Part of this work was performed while P.R. was funded by the National Science Foundation under Grant Nos.  CNS - 2210091, and CHE - 1905043. 

\section*{Data Availability}
Data sharing does not apply to this article as no datasets were generated or analyzed during the current study.

\bibliography{ref}

\end{document}